\def\eqref#1{equation~\ref{#1}}
\def\1{\bm{1}}
\DeclareMathAlphabet{\mathsfit}{\encodingdefault}{\sfdefault}{m}{sl}
\SetMathAlphabet{\mathsfit}{bold}{\encodingdefault}{\sfdefault}{bx}{n}
\DeclareMathOperator*{\argmin}{arg\,min}
\setlist[itemize]{leftmargin=1.5em}
\renewcommand{\eqref}[1]{\textup{(\ref{#1})}}
\tikzstyle{mybraces}=[mirrorbrace/.style={
\newtheorem{lemma}{Lemma}
\newtheorem{theorem}{Theorem}
\newtheorem{fact}{Fact}
\DeclareMathOperator{\msum}{\medmath\sum}
\newcommand{\mint}{\medint\int}
\DeclareMathOperator{\tr}{\text{\rm tr}}
\newcommand{\braces}[1]{{\lbrace #1 \rbrace}}
\newcommand{\trans}{\sf{T}}
\newcommand{\tmin}{t_{\text{\rm min}}}
\newcommand{\tmax}{t_{\text{\rm max}}}
\newcommand{\divergence}{\text{\rm div}}
\newcommand\shape{s}
\newcommand\group{\mathbb{G}}
\newcommand\vol{\text{\rm vol}}
\DeclareRobustCommand{\svdots}{
  \vbox{%
    \baselineskip=0.33333\normalbaselineskip
    \lineskiplimit=0pt
    \hbox{.}\hbox{.}\hbox{.}%
    \kern-0.2\baselineskip
  }%
}
\DeclareMathOperator*{\medcup}{\raisebox{-.15em}{$\mathbin{\scalebox{1.5}{\ensuremath{\cup}}}$}}
\newcommand{\step}{\refstepcounter{proofstep}\emph{Step \theproofstep.~}}
\newcounter{proofstep}
\title{Designing Mechanical Meta-Materials by Learning Equivariant Flows}
\author[1]{Mehran Mirramezani}
\author[2]{Anne S.~Meeussen}
\author[2]{Katia Bertoldi}
\author[3]{Peter Orbanz}  
\author[1]{Ryan P.~Adams}
\affil[1]{Department of Computer Science, Princeton University}
\affil[2]{School of Engineering and Applied Sciences, Harvard University}
\affil[3]{Gatsby Computational Neuroscience Unit, University College London}
\date{} 
\begin{document}

\maketitle

\begin{abstract}
Mechanical meta-materials are solids whose geometric structure results in exotic nonlinear behaviors that are not typically achievable via homogeneous materials. 
We show how to drastically expand the design space of a class of mechanical meta-materials known as \emph{cellular solids}, by generalizing beyond translational symmetry.
This is made possible by transforming a reference geometry according to a divergence free flow that is parameterized by a neural network and equivariant under the relevant symmetry group. 
We show how to construct flows equivariant to the space groups, despite the fact that these groups are not compact.
Coupling this flow with a differentiable nonlinear mechanics simulator allows us to represent a much richer set of cellular solids than was previously possible.
These materials can be optimized to exhibit desirable mechanical properties such as negative Poisson's ratios or to match target stress-strain curves.
We validate these new designs in simulation and by fabricating real-world prototypes.
We find that designs with higher-order symmetries can exhibit a wider range of behaviors.
\end{abstract}

\section{Introduction}
Mechanical meta-materials are engineered structures in which geometric features lead to uncommon mechanical behaviors not achievable from natural materials~\citep{BertoldiEtal2017} with various applications in soft robotics~\citep{Khajehtourian2021}, biomedical devices~\citep{WangEtal2023}, etc.
The design of such materials has begun to attract substantial attention from the machine learning community, e.g., \citet{BeatsonSurrogates,Haetal2023,MaEtal2020}.
One promising class of mechanical meta-materials are cellular solids, characterized by an array of identical pores, that exhibit exotic nonlinear properties such as auxeticity (expanding under tension)~\citep{Bertoldi2010} and reversible shape morphing~\citep{WenzEtal2021}.
The nonlinear functionalities of cellular solids can be programmed by manipulating both the shape and the arrangement of the pores.
\cref{fig:basics}(a) shows examples of cellular solids from \citet{overvelde2012compaction}.

Existing approaches to cellular solids have been limited to translational symmetry of the pore unit cell, determined by shape~\citep{Overvelde2013} or topology optimization~\citep{WangEtal2014}.
However, the space of potential symmetries is not limited only to translation, and exploiting other spatial patterns can potentially unlock new cellular solids with unexpected mechanical responses.
Representing and optimizing this larger space of cellular solids is challenging, but recent developments in machine learning of crystallographically-invariant functions~\citep{adams2023representing} make it possible to construct unconstrained parameterizations that can be used for shape optimization.
In this work, we extend these models and develop a framework for learning richer classes of cellular solids, in which all two-dimensional crystallographic symmetry groups are possible.

The approach we take to the problem is to define the geometry \emph{implicitly} via a neural network flow from a reference configuration.
By carefully constructing this flow to preserve group invariance, simple symmetric cellular structures can be transformed parametrically into complex cellular structures, as illustrated in \cref{fig:overview}.
The advantage of this approach is that it ensures the solution avoids disconnected regions and associated numerical challenges.
We additionally show that such neural flows can be made divergence-free, which makes it possible to avoid modifying the total volume of the structure, allowing the optimization to focus purely on geometry as the means of solving the task.

The paper is structured as follows. 
We first introduce cellular solids, framed in terms of a design space that can exhibit symmetries.
Section~\ref{sec:symmetric_shapes} then develops a constrained optimization problem for learning shapes which are symmetric with respect to a space group while achieving a target volume fraction.
Our approach uses a novel formalism based on equivariant neural network flows.
Next we provide an overview of our implementation, including details of the neural network, the differential mechanics solver, and the mechanical material model used.
We then demonstrate the application of our proposed approach in designing cellular solids with nontrivial functionalities in both simulation and real experiments.
We conclude with related work and a discussion on limitations and future steps.



\begin{figure}[t]
\centering
\begin{subfigure}[t]{0.33\textwidth}
  \begin{tikzpicture}
    \node at (0,0) {\includegraphics[width=.4\textwidth]{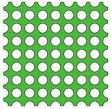}};
    \node at (2.4,0) {\includegraphics[width=.4\textwidth]{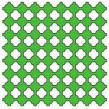}};
    \node at (1.2,-2.0) {\includegraphics[width=.4\textwidth]{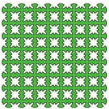}};
  \end{tikzpicture}
    \caption{}
    \label{fig:cellular-solid}
    \end{subfigure}
    \hfill
    \begin{subfigure}[t]{0.33\textwidth}
    \includegraphics[width=\textwidth]{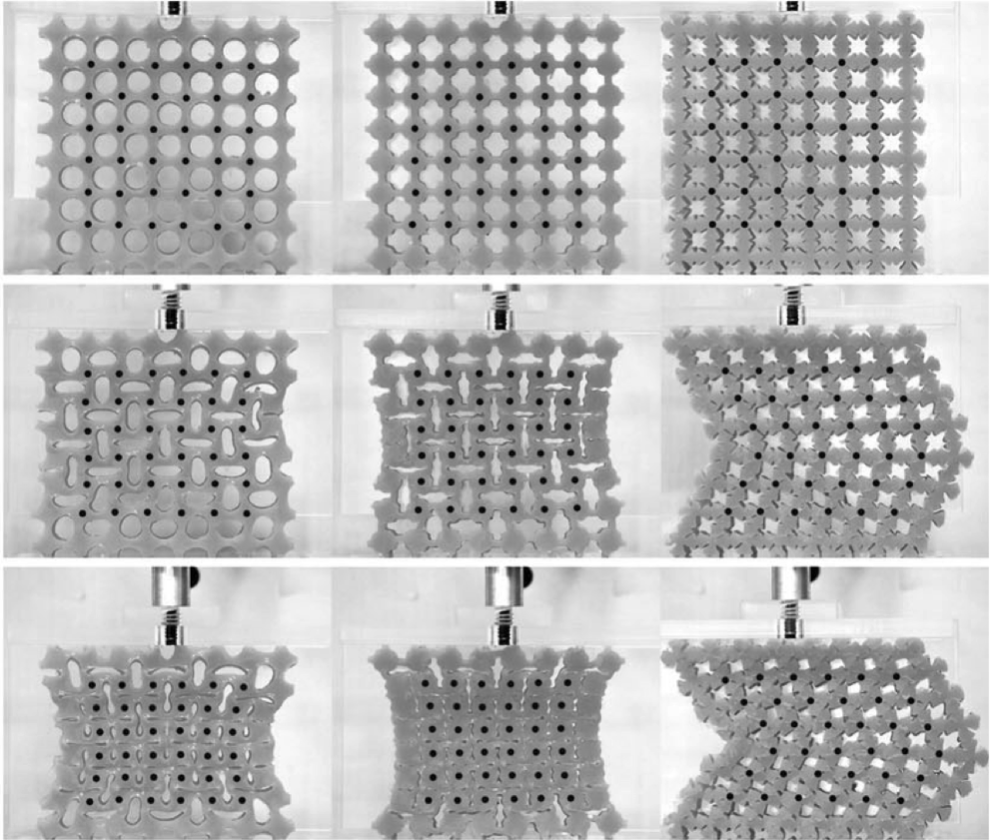}
    \caption{}
    \label{fig:cellular-solid}
    \end{subfigure}
        \hfill
    \begin{subfigure}[t]{0.285\textwidth}
    \includegraphics[width=\textwidth]{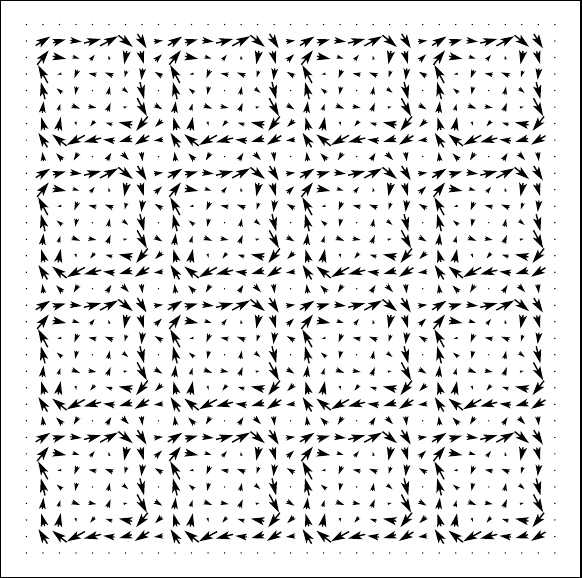}
    \caption{}
    \label{fig:vector-field}
    \end{subfigure}
    \caption{(a) Example of three shape functions ${s:\Omega\subset\mathbb{R}^2\to\braces{0,1}}$, from \citet{overvelde2012compaction}.
      (b) Real-world experiment demonstrating nonlinear behaviors of fabricated solids specified by the shape functions in (a), from \citet{overvelde2012compaction}.
      (c) A vector field defined by a flow equivariant under the space group \texttt{p4},
      constructed as in \cref{sec:symmetric_shapes}.}
    \label{fig:basics}
\end{figure}


\section{Cellular Solids and Symmetric Shapes}

An example of a cellular solid is illustrated in \cref{fig:basics}(a). It consists of a homogeneous solid material, interrupted by empty regions that are called pores. Once the material is chosen, the solid is entirely characterized by its geometry.
This geometry can be specified by a function ${s:\Omega\to\braces{0,1}}$, where
$\Omega$ is the entire volume occupied by the solid, including pores.
This function is called a \textbf{shape}. A value ${s(x)=1}$ specifies there is material at location $x$, a value ${s(x)=0}$ that $x$ is in a pore.
We are particularly interested in cellular solids
that are---like the example in \cref{fig:basics}(a)---completely determined by a two-dimensional cross-section. We can then reduce the shape to a two-dimensional function ${s:\Omega\subset\mathbb{R}^2\to\braces{0,1}}$, where $\Omega$ is now the area of the cross-section.
The methods developed here can be used to construct shape functions on ${\Omega\subset\mathbb{R}^n}$ for both ${n=2}$ and ${n=3}$. Our simulations and real-world
experiments assume ${n=2}$.
\\[.5em]
{\noindent\bf Periodic designs as tilings}.
Clearly visible in \cref{fig:basics}(a) is the periodic structure. This periodicity can be formalized as a tiling:
If ${\Pi\subset\mathbb{R}^n}$ is a convex polytope, and $\mathbb{T}$ is a group of translations of $\mathbb{R}^n$,
then $\mathbb{T}$ is said to \textbf{tile} the space with $\Pi$ if the images $\phi\Pi$ of the polytope
under elements ${\phi\in\mathbb{T}}$ cover the entire space and only their boundaries overlap---formally, if
\begin{equation}
  \label{eq:tiling}
  \medcup_{\phi\in\mathbb{T}}\phi\Pi\;=\;\mathbb{R}^n
  \quad\text{ and }\quad
  \phi\Pi\cap\psi\Pi\;=\;\text{empty set or face of $\phi\Pi$}
\end{equation}
for all ${\phi,\psi\in\mathbb{T}}$. If $\Pi$ is, for example, an axis-aligned square in $\mathbb{R}^2$,
the group $\mathbb{T}$ would consist of all horizontal and vertical shifts by multiples of the edge length
of $\Pi$. A shape $s$ is periodic if there is a function
${\sigma:\Pi\to\braces{0,1}}$ such that
\begin{equation}
  \label{eq:shifted:pattern}
  s(x)\;=\;\sigma(\phi x)\qquad\text{ if }x\in\phi\Pi\text{ for some }\phi\in\mathbb{T}\;.
\end{equation}
In words, $\sigma$ specifies a pattern of pores on a small area $\Pi$, and this pattern is then
replicated over~$\Omega$ by shifts. To avoid dealing with the boundaries of~$\Omega$ explicitly,
we can define $s$ as a function~${s:\mathbb{R}^2\to\braces{0,1}}$ on the entire plane, and restrict it to $\Omega$ where necessary (see \cref{sec:implementation} for details);
observe that \eqref{eq:shifted:pattern} indeed specifies $s$ on all of $\mathbb{R}^2$, if $\mathbb{T}$ tiles $\mathbb{R}^2$ with $\Pi$.
We also note that $s$ satisfies \eqref{eq:shifted:pattern} if and only if it is invariant under $\mathbb{T}$,
\begin{equation*}
  s(\phi x)\;=\;s(x)\qquad\text{ for all }\phi\in\mathbb{T}\text{ and }x\in\mathbb{R}^2\;,
\end{equation*}
in which case $\sigma$ is the restriction of $s$ to $\Pi$. 
\\[.5em]
{\noindent\bf Expanding the design space}.
Our point of departure from existing designs is the observation that tilings cannot only be generated by
translations. In general, one may substitute $\mathbb{T}$ by a group $\group$ of isometries of $\mathbb{R}^n$.
Such a group is said to tile $\mathbb{R}^n$ with a convex polytope $\Pi$ if \eqref{eq:tiling} holds with
$\group$ substituted for $\mathbb{T}$. Any group of isometries that tiles $\mathbb{R}^n$ with
a convex polytope $\Pi$ is called a \textbf{space group} or \textbf{crystallographic group} on $\mathbb{R}^n$. 
In the case ${n=2}$ we are particularly interested in, they are also known as \textbf{wallpaper groups}.
In addition to translations, space groups may also contain rotations, reflections, and other isometries.
Our approach is to construct cellular solids with shape functions that satisfy the generalized periodicity
\begin{equation}
  \label{eq:symmetric:shape}
    s(\phi x)\;=\;s(x)\qquad\text{ for all }\phi\in\group\text{ and }x\in\mathbb{R}^2\;,
\end{equation}
for a space group $\group$. We call such a shape \textbf{symmetric} with respect to $\group$.
\\[.5em]
  {\noindent\bf Properties of space groups}.
  The two-dimensional space groups are completely classified:
Two space groups $\group_1$ and $\group_2$ are \textbf{isomorphic} if one can be transformed into the other by an affine deformation
of the underlying space, i.e., if there is an affine, order-preserving bijection ${\alpha:\mathbb{R}^n\to\mathbb{R}^n}$
such that~${\phi\in\group_1}$ if and only if ${\alpha\phi\alpha^{-1}\in\group_2}$. Up to isomorphism,
there are only finitely many space groups for each dimension $n$. For ${n=2}$, there are 17 such groups. These 
are illustrated in \cref{sec:wallpaper_groups}
Some properties of space groups relevant in the following are:
\\[.5em]
(i) Each space group $\group$ is countably infinite.
\\[.5em]
(ii) Since each element $\phi$ of $\group$ is an isometry, it is
of the form ${\phi x=A_\phi x+b_\phi}$, where ${A_\phi}$ is an orthogonal matrix
and ${b_\phi\in\mathbb{R}^n}$.
\\[.5em]
(iii) $\group$ contains a countably infinite subgroup of pure translations, namely
  \begin{equation*}
    \mathbb{T}_\group
    \;:=\;
    \{\phi\in\mathbb{G}\,|\,A_\phi=\text{identity}\}
    \;=\;
    \{\phi\in\mathbb{G}\,|\,\phi(x)=x+b_\phi\text{ for some }b_\phi\in\mathbb{R}^n\}\,.
  \end{equation*}
  This group is always
  infinite, and generated by $n$ linearly independent shifts: There exist $n$ linearly
  independent vectors ${b_1,\ldots,b_n\in\mathbb{R}^n}$ such that
\begin{equation*}
  \phi\in\mathbb{T}_\group\qquad\Longleftrightarrow\qquad
  \phi(x)=x+\msum_{i=1}^nc_i(\phi)b_i\quad\text{ for some }\quad c_1(\phi),\ldots,c_n(\phi)\in\mathbb{Z}\;.
\end{equation*}
We refer to
\citet{Vinberg:Shvarsman:1993,Ratcliffe:2006} for more on the geometry of such groups, and
\citet{hahn1983international} for their crystallographic properties.

\section{Symmetric Shapes from Symmetry-Preserving Flows}
\label{sec:symmetric_shapes}

\begin{figure}
\centering
\includegraphics[width=1.0\textwidth]{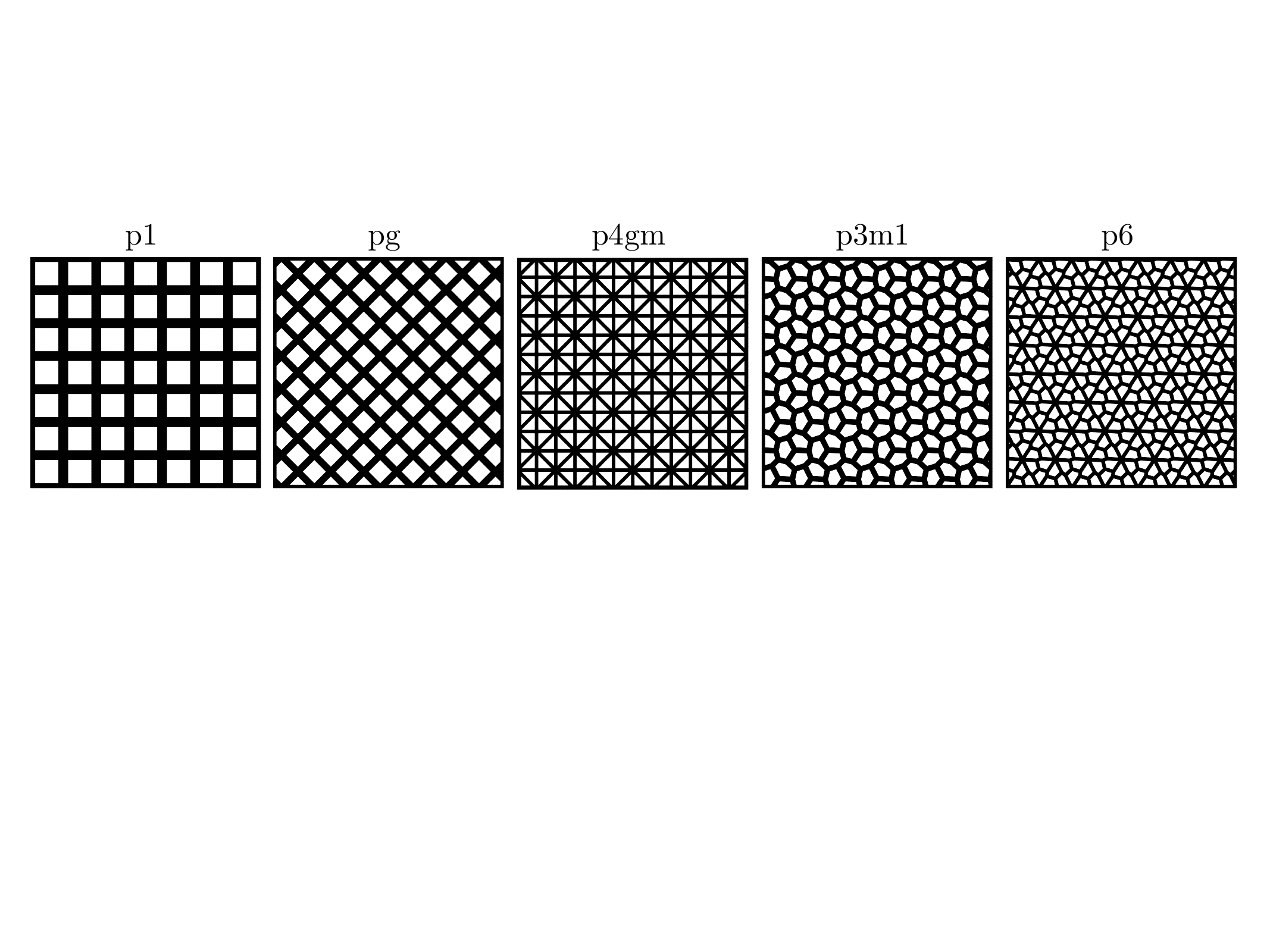}
\caption{Examples of the reference shape $s_0$ for different space groups.
  The labels \texttt{p1}, \texttt{pg}, etc.\ refer to the crystallographic naming standard for such groups see \cref{sec:wallpaper_groups}.}
  \label{fig:initial_cells}   
\end{figure}

We select a class of symmetries by fixing a space group $\group$. Our goal is to solve two problems:
\begin{enumerate}[label=\Roman*.]
\item Constructively define a set ${\mathcal{S}=\braces{s_\theta\,|\,\theta\in\Theta}}$ of shapes that satisfy \eqref{eq:symmetric:shape},
  indexed by parameters in some parameter space $\Theta$.
\item Given a loss function ${\ell:\text{shape}\to\mathbb{R}}$, solve ${\theta^*:=\argmin_{\theta\in\Theta}\,\ell(s_\theta)}$.
\end{enumerate}
The loss represents how well the shape achieves some physical property, such as the Poisson's ratio illustrated in \cref{fig:poisson_tension}.
To solve (I), we proceed as follows:
\begin{itemize}
\item We start with a symmetric reference shape $s_0$. This is a $\group$-invariant function ${s_0:\mathbb{R}^n\to\braces{0,1}}$
  that satisfies \eqref{eq:symmetric:shape}.
\item We then construct a class $\braces{F_\theta|\theta\in\Theta}$ of flows that are symmetry- and volume-preserving.
\item Each shape $s_{\theta}$ is defined as the deformation of $s_0$ by $F_\theta$. That is, we fix a ``terminal time''~${t_{\rm max}>0}$,
  and define
  \begin{equation*}
    s_\theta(x)\;:=\;s_0(F_{\theta}(x,t_{\rm max}))\;.
  \end{equation*}
  These functions are shapes on $\mathbb{R}^n$ that satisfy \eqref{eq:symmetric:shape}, and can be restricted to shapes on $\Omega$.
\end{itemize}
The remainder of this section explains this construction in detail. This solution of (I) can then be combined with a standard differentiable mechanics simulator to solve (II),
as described in \cref{sec:implementation}.

\subsection{Symmetry- and volume-preserving flows}

A flow is a function ${F:\mathbb{R}^n\times\mathbb{R}_+\to\mathbb{R}^n}$ that satisfies
\begin{equation*}
  F(x,0)\;=\;x
  \quad\text{ and }\quad
  F(x,s+t)\;=\;F(F(x,s),t)
  \qquad\text{ for }x\in\mathbb{R}_n\text{ and }s,t\in\mathbb{R}_+\;.
\end{equation*}
For our purposes, we can restrict the second argument of $F$ to an interval ${I=[0,t_{\rm max}]}$.
We call a flow \textbf{symmetry-preserving} for a space group $\group$ if it is $\group$-equivariant
pointwise in time,
\begin{equation*}
  \label{eq:equivariant:flow}
  \phi\circ F(x,t)\;=\;F(\phi x,t)\qquad\text{ for all }\phi\in\group\text{ and }(x,t)\in \mathbb{R}^n\times I\;.
\end{equation*}
If that is the case, then $s_\theta$ is $\group$-invariant, since $s_0$ is $\group$-invariant and so
\begin{equation*}
  \shape_\theta(\phi x)\;=\;\shape_0(F_\theta(\phi x,t_{\rm max}))\;=\;\shape_0(\phi F_\theta(x,t_{\rm max}))\;=\;\shape_0(F_\theta(x,t_{\rm max}))\;=\;\shape_{\theta}(x)\;.
\end{equation*}
We also require $F$ to be \textbf{volume-preserving}. A continuous function ${g:\mathbb{R}^n\to\mathbb{R}^n}$ is volume-preserving if
${\vol(g(R))\;=\;\vol(R)}$ for every (measurable) set ${R\subset\mathbb{R}^n}$\;. For a flow $F$, this means
\begin{equation*}
  \label{eq:vpreserving:flow}
  \vol(F(R,t))\;=\;\vol(R)\qquad\text{ for every (measurable) set }R\subset\mathbb{R}^n\text{ and }t\in I\;.
\end{equation*}
This ensures that the ratio of volume occupied by pores does not change if a shape is deformed using~$F$.

It is well known that a flow $F$ can be constructed as the solution of a differential equation.
If~${H:\mathbb{R}^n\times I\to\mathbb{R}^n}$ is a (sufficiently smooth) function, the constrained equation
\begin{equation}
  \label{eq:ode}
  \mfrac{d}{dt}\,F(x_0,t)\;=\;H(F(x_0,t),t)\quad\text{ subject to } F(x_0,0)=x_0
  \qquad\text{ for }t\in I,x_0\in \mathbb{R}^n
\end{equation}
has a unique solution $F$, and this solution is a flow. Each smooth function $H$ hence determines a
flow $F$, and we can specify a parameterized class of flows $F_\theta$ by specifying a parameterized
class of smooth functions $H_\theta$.
\cref{result:flows} below shows that the flow $F$ is volume- and symmetry-preserving if $H$ is
volume-preserving and satisfies 
\begin{equation}
  \label{semi:invariance}
  H(\phi x,t)\;=\;A_\phi H(x,t)\qquad\text{ for }\phi\in\mathbb{G},x\in \mathbb{R}^n,t\in I\;.
\end{equation}
To construct such a function $H$, we define a symmetrization operator $\Gamma$ and an operator $\Lambda$
that turns functions into volume-preserving functions.

\subsection{Volume preservation}
To define $\Lambda$, we can draw on existing work:
  A continuously differentiable function $g$ is volume-preserving if and only if it is divergence-free,
  \begin{equation*}
    (\divergence\,g)(x)\;=\;(\nabla\cdot g)(x)\;=\;0\qquad\text{ for all }x\in\mathbb{R}^n\;.
  \end{equation*}
  It was noted in \citet{richter2022neural} that, if ${g:\mathbb{R}^n\to\mathbb{R}^{n(n-1)/2}}$ is a function that is twice continuously differentiable then
  \begin{align}
    (\Lambda g)(\mathbf{x}) &:= \left[\begin{smallmatrix}
        0 & \partial_2 g_1(\mathbf{x}) & \partial_3 g_2(\mathbf{x}) & \cdots & \partial_n g_{n-1}(\mathbf{x})\\
        -\partial_1 g_1(\mathbf{x}) & 0 & \partial_3 g_{n}(\mathbf{x}) & \cdots & \partial_n g_{2n-3}(\mathbf{x})\\
        -\partial_1 g_2(\mathbf{x}) & -\partial_2 g_{n}(\mathbf{x}) & 0& \cdots & \partial_n g_{3n-6}(\mathbf{x})\\
        \vdots & \vdots & \vdots & \ddots & \vdots\\
        -\partial_1 g_{n-1}(\mathbf{x}) & -\partial_2 g_{2n-3}(\mathbf{x}) & -\partial_3 g_{3n-6}(\mathbf{x}) & \cdots & 0
      \end{smallmatrix}\right]
    \cdot
  \left[{\begin{smallmatrix}
        1\\
        1\\
        1\\
        \vdots\\
    1
    \end{smallmatrix}}\right]
  \end{align}
  is a divergence-free function ${\Lambda g:\mathbb{R}^n\to\mathbb{R}^n}$.

\subsection{Symmetrization}
For finite groups, invariant or equivariant functions are typically constructed by the
``summation trick'', which starts with a function $f$ and sums over all compositions $f\circ\phi$
with group elements. This is not possible for space groups, since these groups are countably infinite.
We can, however, decompose the symmetrization into two steps. This decomposition is based on the following
property.
\begin{theorem}
  \label{result:semi:invariance}
  Let $\group$ be a space group on $\mathbb{R}^n$. For each ${\phi\in\group}$, denote by $c_i(\phi)$ the linear coefficient of $b_\phi$ with respect
  to the shift basis vector $b_i$, that is, ${\phi(x)=A_{\phi}x+\sum_{i\leq n}c_i(\phi)b_i}$. Then
  \begin{equation*}
    \widehat{\mathbb{G}}:=\braces{\phi\in\group\,|\,c_1(\phi),\ldots,c_n(\phi)\in[0,1)}\;
  \end{equation*}
  is a finite subset of $\group$. For each ${\phi\in\group}$, there are unique elements ${\hat{\phi}\in\widehat{\group}}$
  and ${\tau_\phi\in\mathbb{T}_\group}$ such that ${\phi=\hat{\phi}+\tau_\phi}$.
  If ${f:\mathbb{R}^n\to\mathbb{R}^n}$ is invariant under the shift subgroup $\mathbb{T}_\group$
  of $\group$, the function
  \begin{align*}
    \label{eq:Gamma}
    (\Gamma f)(\mathbf{x}) &:= \frac{1}{|\widehat{\mathbb{G}}|}\msum_{\phi\in\widehat{\mathbb{G}}} 
    \mathbf{A}_{\phi}^{-1} f(\phi x)
  \end{align*}
  satisfies ${A_{\phi}(\Gamma f)=(\Gamma f)\circ\phi}$ for each ${\phi\in\group}$.
\end{theorem}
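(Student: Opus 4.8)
The plan is to prove the three assertions---finiteness of $\widehat{\mathbb{G}}$, the unique decomposition $\phi=\hat\phi+\tau_\phi$, and the equivariance of $\Gamma f$---in that order, since each relies on the previous, and the whole argument rests on a single geometric input: the linear parts $A_\phi$ form a finite set. I would first isolate this as the \emph{point group} $P:=\braces{A_\phi\mid\phi\in\mathbb{G}}$, which is a group because $A_{\phi\psi}=A_\phi A_\psi$ and $A_{\phi^{-1}}=A_\phi^{-1}$. The key computation is that $P$ preserves the translation lattice $L:=\mathbb{Z}b_1+\cdots+\mathbb{Z}b_n$: for any $\phi\in\mathbb{G}$ and any translation $\tau\in\mathbb{T}_{\mathbb{G}}$ by a vector $v\in L$, a direct check shows $\phi\circ\tau\circ\phi^{-1}$ is the translation $x\mapsto x+A_\phi v$; since this element lies in $\mathbb{G}$ and has identity linear part, property (iii) places it in $\mathbb{T}_{\mathbb{G}}$, so $A_\phi v\in L$. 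Taking $v=b_i$ gives $A_\phi b_i\in L$. Because $A_\phi$ is orthogonal, $\|A_\phi b_i\|=\|b_i\|$, and the sphere of that radius meets the discrete lattice $L$ in finitely many points; as $A_\phi$ is determined by its values on the basis $b_1,\dots,b_n$, only finitely many linear parts are possible, so $P$ is finite. This finiteness of the point group is the main obstacle---everything after it is bookkeeping.

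Finiteness of $\widehat{\mathbb{G}}$ then follows by showing each linear part is attained by at most one element of $\widehat{\mathbb{G}}$. If $\psi,\psi'\in\widehat{\mathbb{G}}$ share a linear part $A$, then $\psi\circ\psi'^{-1}$ is the translation by $b_\psi-b_{\psi'}$, which lies in $\mathbb{T}_{\mathbb{G}}$ and hence in $L$; writing $b_\psi-b_{\psi'}=\sum_i\bigl(c_i(\psi)-c_i(\psi')\bigr)b_i$ with each coefficient in $(-1,1)$ and integral forces every coefficient to vanish, so $\psi=\psi'$. Hence $|\widehat{\mathbb{G}}|\le|P|<\infty$.

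For the decomposition, given $\phi$ I would split each coefficient as $c_i(\phi)=m_i+r_i$ with $m_i=\lfloor c_i(\phi)\rfloor\in\mathbb{Z}$ and $r_i\in[0,1)$, let $\tau_\phi\in\mathbb{T}_{\mathbb{G}}$ be the translation by $\sum_i m_i b_i\in L$, and set $\hat\phi:=\tau_\phi^{-1}\circ\phi$. Closure of $\mathbb{G}$ gives $\hat\phi\in\mathbb{G}$, and $\hat\phi(x)=A_\phi x+\sum_i r_i b_i$ has all coefficients in $[0,1)$, so $\hat\phi\in\widehat{\mathbb{G}}$ and $\phi=\hat\phi+\tau_\phi$. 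Uniqueness is exactly the fractional-part argument of the previous paragraph: two such decompositions must have matching linear parts, and the coefficients of their translation parts differ by integers lying in $(-1,1)$, hence agree.

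Finally, for the equivariance I would observe that the conjugation computation above shows $\mathbb{T}_{\mathbb{G}}$ is normal in $\mathbb{G}$, so $\widehat{\mathbb{G}}$ is a transversal of $\mathbb{T}_{\mathbb{G}}$ and the map $\psi\mapsto\widehat{\psi\phi}$ is a bijection of $\widehat{\mathbb{G}}$ (it realizes right multiplication by the class of $\phi$ on $\mathbb{G}/\mathbb{T}_{\mathbb{G}}$). Fixing $\phi\in\mathbb{G}$ and decomposing $\psi\phi=\tau\circ\widehat{\psi\phi}$ for each $\psi\in\widehat{\mathbb{G}}$, the $\mathbb{T}_{\mathbb{G}}$-invariance of $f$ gives $f(\psi\phi\,x)=f(\widehat{\psi\phi}\,x)$, while $A_{\widehat{\psi\phi}}=A_{\psi\phi}=A_\psi A_\phi$ yields $A_\psi^{-1}=A_\phi A_{\widehat{\psi\phi}}^{-1}$. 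Substituting these into $(\Gamma f)(\phi x)$, factoring out $A_\phi$, and reindexing the sum along the bijection $\psi\mapsto\widehat{\psi\phi}$ collapses it to $A_\phi(\Gamma f)(x)$, which is precisely $A_\phi(\Gamma f)=(\Gamma f)\circ\phi$.
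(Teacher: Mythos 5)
Your proof is correct, and the final equivariance computation is essentially the paper's: both arguments reindex the sum over $\widehat{\mathbb{G}}$ via the bijection induced by right multiplication by the fixed group element (your $\psi\mapsto\widehat{\psi\phi}$ is the paper's substitution $\pi:=\hat{\phi}\psi$ combined with its observations $\widehat{\mathbb{G}\psi}=\widehat{\mathbb{G}}$ and $\hat{\phi}\psi\equiv\widehat{\phi\psi}$), use $\mathbb{T}_\group$-invariance of $f$ to drop the residual translation, and factor out $A_\phi$ from $A_\psi^{-1}=A_\phi A_{\widehat{\psi\phi}}^{-1}$. Where you genuinely diverge is the finiteness of $\widehat{\mathbb{G}}$: the paper imports as a black box the standard fact that the quotient $\group/\mathbb{T}_\group$ is finite (citing Vinberg--Shvartsman) and identifies $\widehat{\mathbb{G}}$ with a transversal of the cosets, whereas you derive finiteness from first principles---conjugation shows the point group $P$ preserves the translation lattice, orthogonality plus discreteness of the lattice forces $P$ to be finite, and the fractional-part argument shows each linear part occurs at most once in $\widehat{\mathbb{G}}$, giving $|\widehat{\mathbb{G}}|\le|P|$. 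Your route is longer but self-contained (it in fact reproves the finiteness of $\group/\mathbb{T}_\group$ that the paper assumes), while the paper's is shorter at the cost of an external reference; the decomposition $\phi=\hat{\phi}+\tau_\phi$ via integer and fractional parts of the coefficients $c_i(\phi)$ is identical in both.
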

\begin{proof}
  See \cref{sec:proof:semi:invariance}.
\end{proof}
To construct the $\mathbb{T}_\group$-invariant function required by the result,
we use maximal invariants: Define an~${n\times n}$-matrix $B$ as
\begin{equation*}
  {B}\;:=\;\text{ matrix whose columns are the lattice basis vectors }b_1,\ldots,b_n\;.
\end{equation*}
This matrix describes a change of basis from the cartesian basis to ${b_1,\ldots,b_n}$.
The set~${\braces{B^{-1}\phi\,|\,\phi\in\mathbb{T}_\group}}$ therefore consists of all shifts
by vectors with integer entries, or in other words,~${B^{-1}\mathbb{T}_{\group}=\mathbb{Z}^n}$.
Define a function ${\rho:\mathbb{R}^n\to\mathbb{R}^{2n}}$ as
\begin{align*}
  \rho(x):=\tilde{\rho}(B^{-1}x)
  \quad\text{ where }\quad
  \tilde{\rho}({u}):= (\cos(2\pi u_1),\sin(2\pi u_1),\ldots,\cos(2\pi u_n),\sin(2\pi u_n))^{\trans}\;.
\end{align*}
This function maps $\mathbb{R}^n$ onto the $2n$-dimensional torus. 
It is well known that $\tilde{\rho}$ is a maximal invariant for the shift group
${\mathbb{Z}^n}$. This means that a function $g$ on $\mathbb{R}^n$ is $\mathbb{Z}^n$-invariant if and only
if it can be represented as ${g=g'\circ\rho}$ for some function $g'$ on $\mathbb{R}^{2n}$.
It follows that $\rho$ is a maximal invariant for~$\mathbb{T}_\group$.

\subsection{Flow construction} \label{sec:flow_construction}

We can now combine $\Gamma$, $\Lambda$ and $\rho$ to construct
a divergence-free function $H$ that satisfies \eqref{semi:invariance} from a smooth function $h$,
as the next result shows.
We can therefore use a neural network with parameter vector $\theta$
to represent a class $\braces{h_\theta|\theta\in\Theta}$ of such functions. For each $\theta$, we
then obtain a unique flow~$F_\theta$ that preserves volume and symmetry.
\begin{theorem}
  \label{result:flows}
  Let $\group$ be a space group on $\mathbb{R}^n$. 
  Choose a function ${h:\mathbb{R}^{2n}\times I\to\mathbb{R}^{n(n-1)/2}}$ that is Lipschitz-continuous,
  and ${(k+1)}$-times continuously differentiable on $I$, for some ${k\in\mathbb{N}}$. Then
    \begin{equation*}
    H(x,t)\;:=\;\Gamma(\Lambda(h(\rho(x),t)))\qquad\text{ for }(x,t)\in\mathbb{R}^n\times I
  \end{equation*}
  is a divergence-free function ${\mathbb{R}^n\times I\to\mathbb{R}^n}$ that 
  satisfies \eqref{semi:invariance}. For this function $H$, the differential equation
  \eqref{eq:ode} has a unique solution $F$. The function $F$ is a flow, is volume-preserving,
  and is symmetry-preserving for $\group$. It is continuous in its first argument, and $k$ times
    continuously differentiable in the second.
\end{theorem}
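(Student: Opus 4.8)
The plan is to establish the statement in four stages: first that $H$ is well defined and divergence-free, then that it satisfies \eqref{semi:invariance}, then that the ODE \eqref{eq:ode} admits a unique flow $F$ on all of $I$, and finally that this $F$ is volume- and symmetry-preserving with the claimed regularity. The two structural properties of $H$ are the engine that drives everything: divergence-freeness yields volume preservation via Liouville's theorem, and \eqref{semi:invariance} yields symmetry preservation via uniqueness of solutions.

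For the properties of $H$, I would first note that $\rho$ is smooth (a composition of the linear map $B^{-1}$ with trigonometric functions), so $x\mapsto h(\rho(x),t)$ inherits enough spatial regularity for $\Lambda$ to apply, and $\Lambda(h(\rho(\cdot),t))$ is divergence-free by the quoted result of \citet{richter2022neural}. It then remains to check that $\Gamma$ preserves divergence-freeness. This follows from a short orthogonal change-of-variables computation: for $\phi\in\group$ with $\phi x = A_\phi x + b_\phi$ and any $C^1$ field $f$ one has $\divergence\,(A_\phi^{-1}f(\phi\,\cdot)) = (\divergence\,f)\circ\phi$, since $\sum_i (A_\phi)_{ki}(A_\phi^{-1})_{ij}=\delta_{kj}$; averaging over the finite set $\widehat{\group}$ then keeps $\Gamma(\Lambda(\cdots))$ divergence-free. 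For \eqref{semi:invariance} I would invoke \cref{result:semi:invariance}: because $\rho$ is a maximal invariant for $\mathbb{T}_\group$, the map $x\mapsto h(\rho(x),t)$ is $\mathbb{T}_\group$-invariant, and because a translation-invariant function has translation-invariant partial derivatives, $\Lambda(h(\rho(\cdot),t))$ is again $\mathbb{T}_\group$-invariant. \cref{result:semi:invariance} then promotes this to the full $\group$-semi-invariance $H(\phi x,t)=A_\phi H(x,t)$.

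With $H$ in hand I would turn to \eqref{eq:ode}. The key analytic point is that $\rho$ takes values in a compact torus, so $H$ and its relevant derivatives are bounded and globally Lipschitz in $x$, uniformly in $t$; Picard--Lindel\"of then gives a unique solution $F(x_0,\cdot)$ on all of the fixed interval $I$, and continuous dependence on initial data gives continuity of $F$ in its first argument. The identities $F(x,0)=x$ and $F(x,s+t)=F(F(x,s),t)$ follow from the initial condition and uniqueness. Volume preservation is Liouville's theorem: the Jacobian $J(x,t)=\det \partial_x F(x,t)$ satisfies $\tfrac{d}{dt}J=(\divergence\,H)(F,t)\,J=0$, so $J\equiv 1$ and $\vol(F(R,t))=\vol(R)$. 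Symmetry preservation follows again by uniqueness: for fixed $\phi\in\group$ both $t\mapsto \phi\,F(x,t)$ and $t\mapsto F(\phi x,t)$ solve \eqref{eq:ode} with initial value $\phi x$ — the former because $\tfrac{d}{dt}\phi F = A_\phi H(F,t)=H(\phi F,t)$ by \eqref{semi:invariance} — so they coincide.

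Finally, for the regularity in $t$ I would bootstrap \eqref{eq:ode}. Since $\Gamma$, $\Lambda$, and precomposition by $\rho$ involve no time derivatives, $\partial_t^{\,j}H=\Gamma(\Lambda((\partial_t^{\,j}h)(\rho(\cdot),t)))$ for $j\le k+1$, so $H$ inherits the $(k+1)$-fold $t$-differentiability of $h$; differentiating $\dot F=H(F,t)$ repeatedly and using the spatial smoothness of $H$ then shows $F(x_0,\cdot)\in C^k$. I expect the main obstacle to be precisely this analytic bookkeeping together with the passage from local to global existence: the non-compactness of $\group$ must be absorbed entirely by the torus map $\rho$ (which makes $H$ spatially periodic, hence bounded) before any of the ODE estimates can be made uniform over $I$, and tracking how many spatial versus temporal derivatives of $h$ are consumed at each differentiation is where care is required.
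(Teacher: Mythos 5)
Your proposal is correct and follows essentially the same route as the paper: divergence-freeness of $\Lambda(h\circ\rho)$ together with the fact that the orthogonal averaging $\Gamma$ preserves it (your change-of-variables identity $\divergence\,(A_\phi^{-1}f(\phi\,\cdot))=(\divergence f)\circ\phi$ is exactly the paper's \cref{lemma:divergence}), $\mathbb{T}_\group$-invariance of the derivatives feeding into \cref{result:semi:invariance} for the semi-invariance of $H$, Picard--Lindel\"of plus a uniqueness argument for equivariance of $F$, and Liouville's theorem for volume preservation. The only cosmetic difference is that you justify $\mathbb{T}_\group$-invariance of $\Lambda(h(\rho(\cdot),t))$ by noting that derivatives of translation-invariant functions are translation-invariant, whereas the paper's \cref{lemma:differential} instead writes $D\rho$ explicitly as a function of $\rho$; both arguments work.
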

\begin{proof}
  See \cref{sec:proof:flows}.
\end{proof}

\section{Implementation}
\label{sec:implementation}

\begin{figure}
\centering
\includegraphics[width=1.0\textwidth]{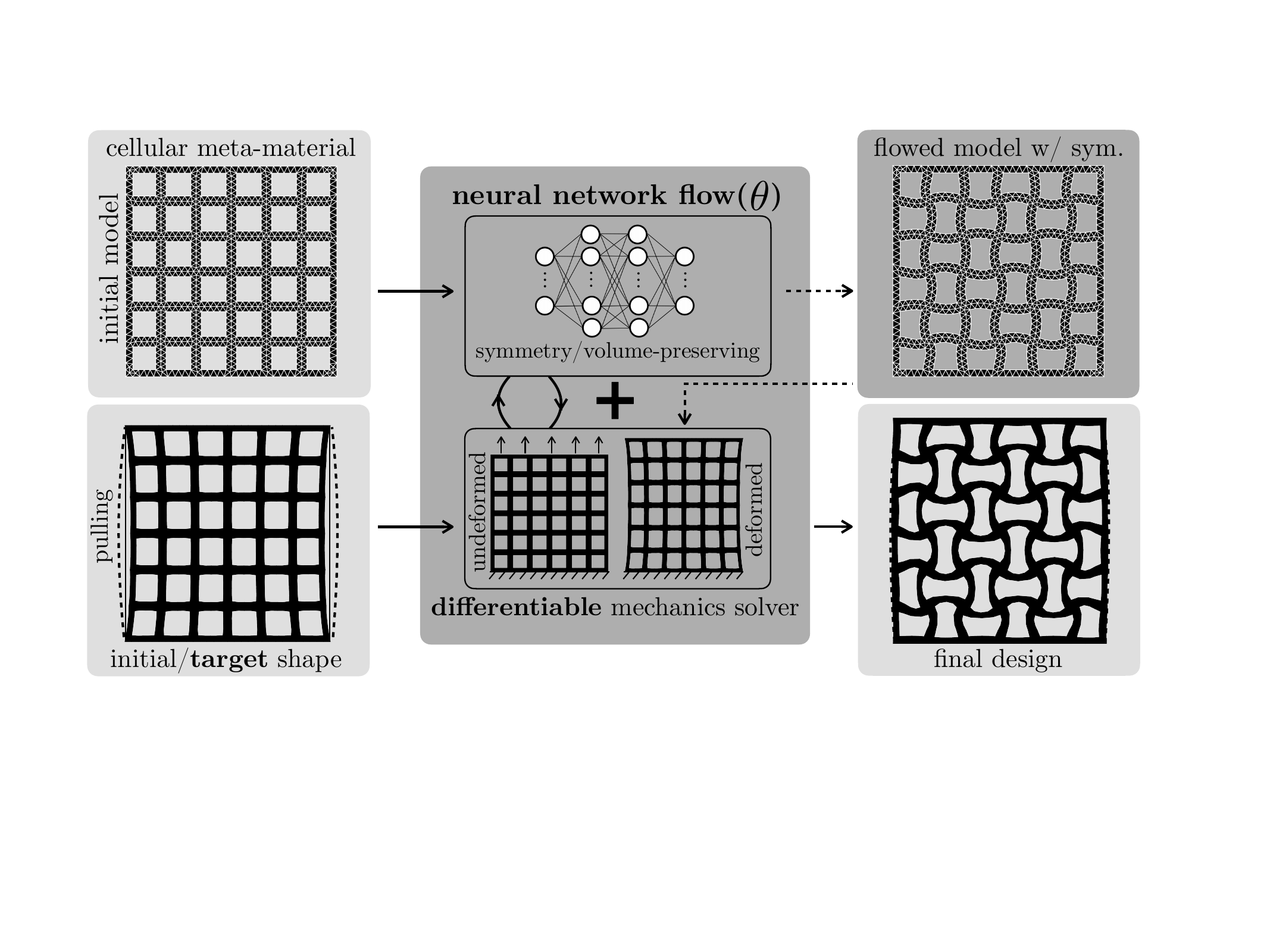}
\caption{An overview of our modeling framework implemented in JAX. An equivariant neural network parameterized by $\theta$ works in tandem with a differentiable mechanics solver to flow an initial geometry to learn cellular solids with desired functionalities from a rich set of volume- and symmetry-preserving geometries.
The upper left shape is~$s_0$ and the upper right shape is~$s_\theta$.}
\label{fig:overview}   
\end{figure}

To design cellular solids, we combine the shape representation defined above with
a mechanics simulator. We proceed as follows:
\\[.5em]
1) Start with a reference shape $s_0$. This is generated by selecting a space group $\group$ and a tiling convex polytope $\Pi$ for this group. We then define a set of
pores on $\Pi$ and tile with the images $\phi\Pi$ under group elements $\phi\in\group$
such that the entire region $\Omega$ is covered.
\\[.5em]
2) The region~$\{x : s_0(x) =1\}$ is represented by a triangular mesh using pygmsh, an open-source python package for geometry and mesh processing. This yields a discretization
of the shape in terms of~$N$ points in $\mathbb{R}^2$.
\\[.5em]
3) The image of the mesh points under a flow $F_\theta$ is constructed as in
\cref{sec:symmetric_shapes} to define a symmetric shape covering the same volume
as $s_0$ for each parameter value $\theta$.
\\[.5em]
4) Using a mechanics simulator, we simulate the behavior of $s_{\theta}$
under physical effects, such as lateral deformation. We define a cost function
$\ell$ that measures the response to a given physical effect, in such a way that
small values of $\ell$ encode desired behavior.
\\[.5em]
5) We minimize $\ell$ with respect to $\theta$ using gradient descent, as described in more detail below.
\\[.5em]
The overall setup is illustrated in \cref{fig:overview}.
\\[.5em]
{\noindent\bf Differentiable mechanics simulator}.
To solve the optimization problem, we must backpropagate gradients through both the mechanics solver and the flow representation. This involves solving a
static nonlinear elasticity problem, described in \cref{sec:mech_modle}.
To do so, we use an end-to-end differentiable continuum mechanics solver~\citep{Denizthesis}. This solver is based on a variant of the finite element method called isogeometric analysis (IGA)~\citep{HughesEtal2005,Hughes2012}.
IGA represents both geometry and solutions in the same smooth B-spline basis functions. This allows us to directly construct differentiable maps from (i) geometry parameters to the solution, and (ii) to a specified loss function defined by the solution.
The optimizer is implemented in JAX~\citep{JAX}, and uses automatic differentiation and adjoint methods.

{\noindent\bf Neural network ansatz}.
The function $h_\theta$ in \cref{sec:symmetric_shapes} is represented by 
a fully-connected neural network with two hidden layers of size~${10}$, with $\tanh$ nonlinearity. To optimize these parameters, we use ADAM, with a learning rate of $0.001$.
Each optimization is performed several times with different initialization of the neural network parameters.

{\noindent\bf Boundary conditions}. In \cref{sec:symmetric_shapes}, shapes are constructed as functions on $\mathbb{R}^n$, here for $n=2$. To represent
cellular solids, the function must be restricted to the finite volume $\Omega$,
and we must ensure this restriction respects boundary conditions. 
To ensure preservation of the boundaries regardless of the parameters~$\theta$, we use an envelope function which takes~$H(x,t)\to 0$ at the edges of $\Omega$.

\section{Applications}
Our experiments design novel cellular solids with two types of nonlinear behaviors: 1) nontrivial force-displacement responses, and 2) effective Poisson's ratios\footnote{Informally, Poisson's ratio is how much a material expands laterally when squeezed from the top, or contracts when pulled.  Almost all materials have a positive Poisson's ratio.} under tension and compression.
These types of behaviors are important in engineering applications such as soft robotics and programmable actuations.
Here we explore examples of such behavior that would be challenging or impossible to achieve with homogeneous materials.

In uniaxial loading we apply a displacement uniformly to the top edge of the material while fixing the bottom edge; the left and right boundaries are free to move.
The force-displacement response is usually described by a nominal stress-strain curve, in which nominal stresses ($S$) in our experiments are computed as the resulting vertical reaction forces per thickness (in the bottom boundary) normalized by the width of the cellular solids.
Strain ($\epsilon$) values are the applied displacements per meta-material height ($H$) that in this study we use a displacement of equal to $10\%$ of the height of the structure resulting in the final applied strain of 0.1.

The effective Poisson's ratios $\nu_{ef}$ are approximated using horizontal displacement in a region of size~$H/2$ in the middle of left and right boundaries ($\Omega_l$ and~$\Omega_r$, see \cref{fig:poisson_tension} top left) as in~\citet{MedinaEtal2023}: 
\begin{equation}\label{enrgy_eq}
   \nu_{ef} = \frac{1}{H\epsilon}\bigg(\frac{1}{|\Omega_l|} \int_{\Omega_l} u_x ds -  \frac{1}{|\Omega_r|} \int_{\Omega_r} u_x ds \bigg).
\end{equation}
When targeting force-displacement responses we use curves that are obtained at equally spaced displacement loading increments  applied at the top edge to minimize a loss $L = \sum_i [S(\epsilon_i)- S^{t}(\epsilon_i)]^2$ with respect to a target $S^{t}$ response.
We also design cellular solids with target Poisson's ratios at~${\epsilon=0.1}$.

The mechanical behavior of cellular solids in this study is captured with a nearly incompressible neo-Hookean model with strain energy density function given in \eqref{eq:energy} that has material properties of Young's modulus~${E=50}$ MPa and Poisson's ratio~${\nu=0.46}$.
In our experiments, cellular solids are of size $7\times7$ with initial material volume fraction of $50\%$ that is preserving during designs.

In all force-displacement experiments, neural network parameters are optimized until a loss value less than $0.0001$ is achieved, where for Poisson's ratio designs a loss value less than $0.01$ was used for stopping simulations.
Using a single NVIDIA GeForce RTX 2080 Ti GPU, each optimization step (which requires forward simulation and gradient computation) requires approximately~${60}$ seconds; there are 25,000 degrees of freedom.

\subsection{Designing under uniaxial tension}
Cellular solid structures built from hyperelastic materials usually exhibit nonlinear force-displacement responses.
Here, we aim at proposing cellular solids with linear responses under uniaxially pulling experiments i.e., $S^{t}(\epsilon) = C \epsilon$ such that $S^{t}(\epsilon=0.1) = \beta S_{0}$ where $\beta$ changes from 0.1 to 1.5 with an interval of 0.1 for a given normalized nominal stress $S_{0}$.
We find that for all considered 17 symmetry groups, \texttt{p2gg} group indicated the best performance by being able to learn cellular solids with linear force-displacement responses for all $\beta$ values except $\beta=1.5$.
The learned meta-materials with their linear responses are depicted in \cref{fig:linear_tension} for $\beta=0.1$, $0.5$ and $1.4$.
While we are able to achieve a wide range of linear responses with \texttt{p2gg} group, with \texttt{p1} translational symmetry we can design meta-materials that have mechanical responses with $\beta$ up to 1.1.

\begin{figure}[h!] 
\centering
\includegraphics[width=1.0\textwidth]{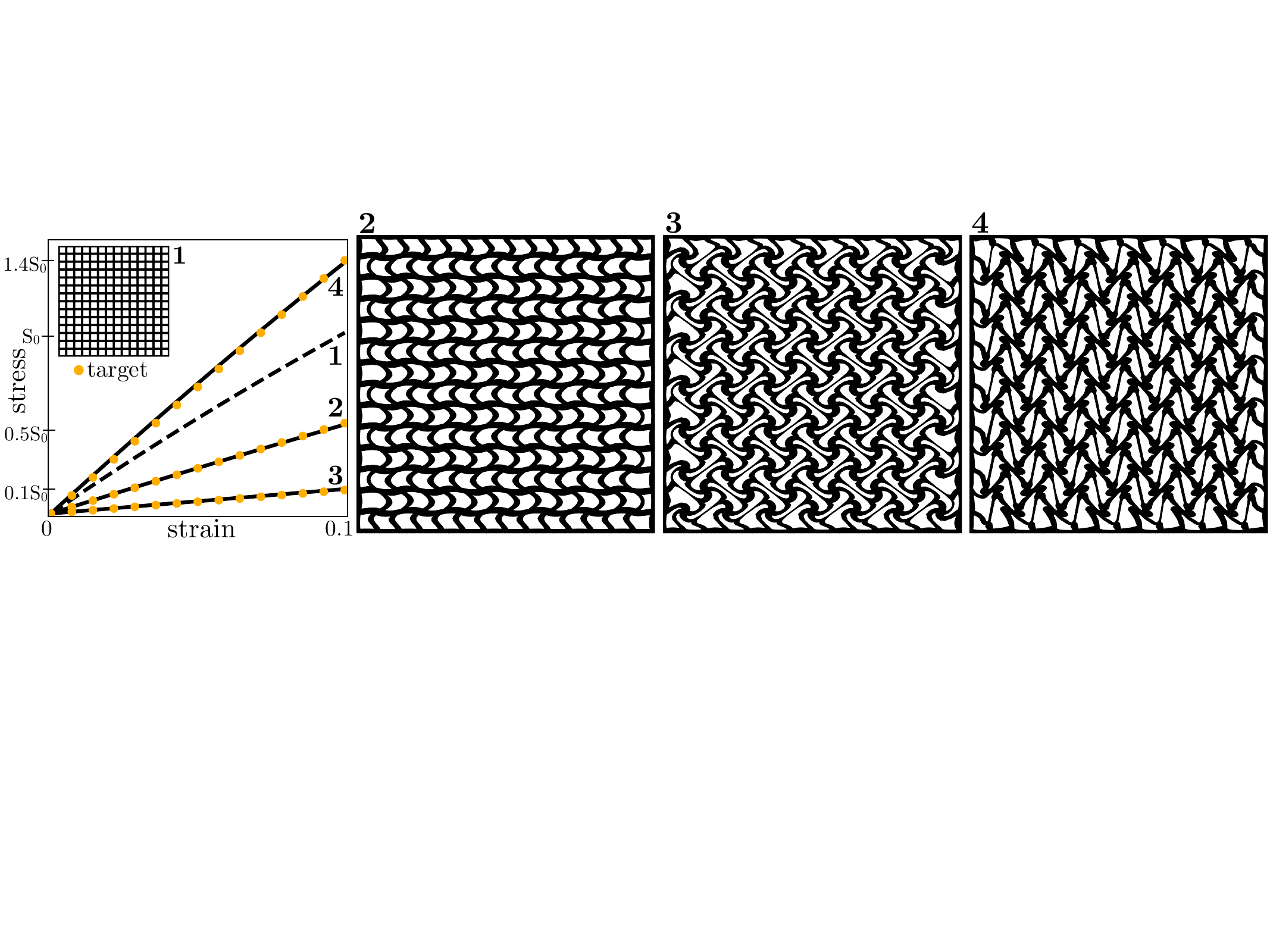}
\caption{Undeformed configurations (i.e., $\epsilon=0$) of three cellular solids designs with 50\% volume fraction and pore shapes respecting \texttt{p2gg} symmetry group, and their linear stress-strain responses with the corresponding target curves during a uniaxial tension of the top edge up to $\epsilon=0.1$.}
\label{fig:linear_tension}   
\end{figure}

\begin{table}[b]
\caption{Poisson's ratios of cellular solids from eight symmetry groups with top performances when aiming for~${\nu_{ef}=-0.5}$.}
\label{table:poisson}
\centering
\vspace{-0.25cm}%
\begin{tabular}{cccccccccc}
\hline
groups  & \texttt{p1} & \texttt{p2} & \texttt{pg} & \texttt{p2mg} & \texttt{p2gg} & \texttt{p4} & \texttt{p3} & \texttt{p6}\\
\hline
$\nu_{ef}$  &  $-0.05$ &  $-0.04$&  $-0.12$&  $-0.40$&  $-0.10$&  $-0.45$&  $-0.11$&  $-0.14$\\
\hline
\end{tabular}
\end{table}

Next we consider designing auxetic meta-materials with negative Poisson's ratios.
We explored cellular solids that can exhibit a large~${\nu_{ef}=-0.5}$ during tension.
With \texttt{p1} symmetry, the best design can only achieve $\nu_{ef}=-0.05$, whereas \texttt{pg} symmetry can improve this to $\nu_{ef}=-0.12$ (\cref{fig:poisson_tension}).
Our framework was able to learn a cellular solid with \texttt{p4} symmetry that has $\nu_{ef}=-0.45$ that is very close to the target negative Poisson's ratio.
The eight best-performing cellular solids designs among the 17 group-constrained designs in achieving~${\nu_{ef}=-0.5}$ are  shown in \cref{table:poisson}. 
This indicates the benefit of exploring a larger space of cellular solids than those exhibiting simple translational symmetries.
In this case, six symmetry groups have better performances than \texttt{p1} symmetry.
All final eight designs are depicted in \cref{sec:nu_designs}.

\textbf{Real-world experiments:} To demonstrate transfer to the real world, we manufactured the final designs with \texttt{pg} and \texttt{p4} symmetry groups and quantitatively verified their behavior under a uniaxial tension test (detailed in \cref{sec:fabrication}).
As predicted by our simulations in \cref{fig:poisson_tension} (second column) and confirmed with experimental observations (third column), the two proposed designs have negative Poisson's ratios. 
The experimental measurements $\nu_{ex}=-0.14$ and $\nu_{ex}=-0.49$ under pulling for \texttt{pg} and \texttt{p4} models, respectively, are in strong agreement with simulation results.
The video footage of the \texttt{pg} cellular solid sample during pulling experiment with the corresponding simulation result are included in the supplementary materials.

\begin{figure}[h!] 
\centering
\includegraphics[width=1.0\textwidth]{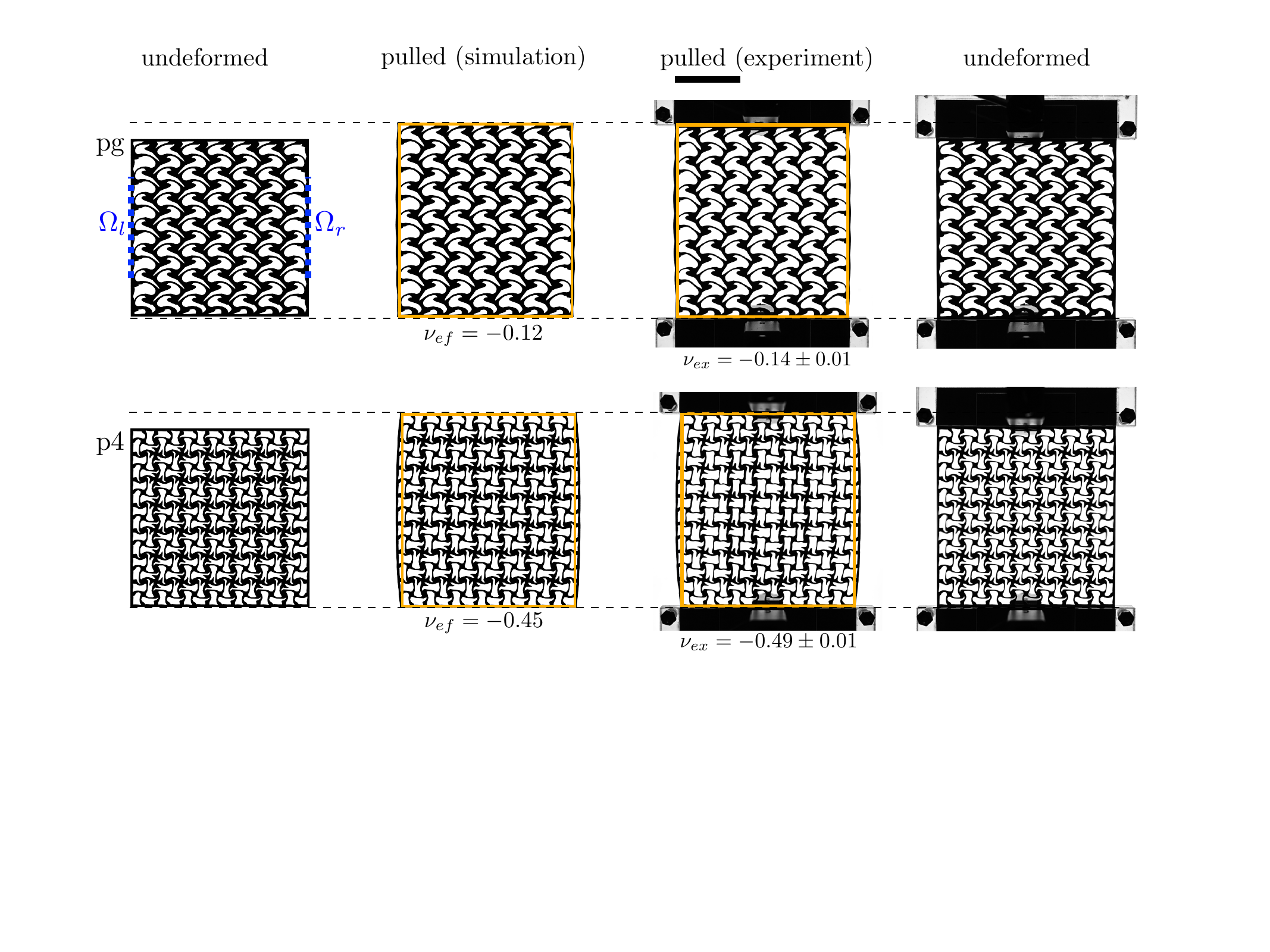}
\caption{Undeformed configurations (i.e., $\epsilon=0$) of two cellular solids designs that have negative Poisson's ratios with 50\% volume fraction and pore shapes respecting \texttt{pg} and \texttt{p4} symmetry groups (first column), and their deformed configurations from simulations during a uniaxial pulling of the top edge up to $\epsilon=0.1$ (second column). Experimental realization of the same cellular solids under pulling confirms negative Poisson's ratios for both designs. The experimental measurements are also in strong numerical agreement with simulation results (third and fourth columns). Scale bars: 5cm.}
\label{fig:poisson_tension}   
\end{figure}

\subsection{Designing under uniaxial compression}
The design of cellular solids under uniaxial compression is a more challenging task because of nonlinear buckling instabilities observed in such structures~\citep{Overvelde2013}.
In such cases, the nominal stress-strain responses are characterized by a critical strain value ($\epsilon_{cr}$) at which mechanical responses vary.
Here, we seek to design cellular solids that behave linearly before an $\epsilon_{cr}$ while having a constant nominal stress as the applied strain increases to the final value of 0.1:
\begin{equation}
  S(\epsilon)=\begin{cases}
    C\epsilon & \text{if $\epsilon \leq \epsilon_{cr}$},\\
    C\epsilon_{cr} & \text{if $\epsilon_{cr} \leq \epsilon \leq 0.1$}.
  \end{cases}
\end{equation}
\cref{fig:force_comp} illustrates mechanical responses of three cellular solid designs under uniaxial compression.
We were able to design materials with various compression responses by controlling the plateau forces for a fixed $\epsilon_{cr}$ (designs 3 and 4) and tuning both the plateau force and the location of $\epsilon_{cr}$ (design 2) using \texttt{p1} symmetry, which found to be a challenging design factor in previous works~\citep{MedinaEtal2023}.
Interestingly, the only symmetry group that enabled such designs was \texttt{p1}.

\begin{figure}[b]
\centering
\includegraphics[width=1.0\textwidth]{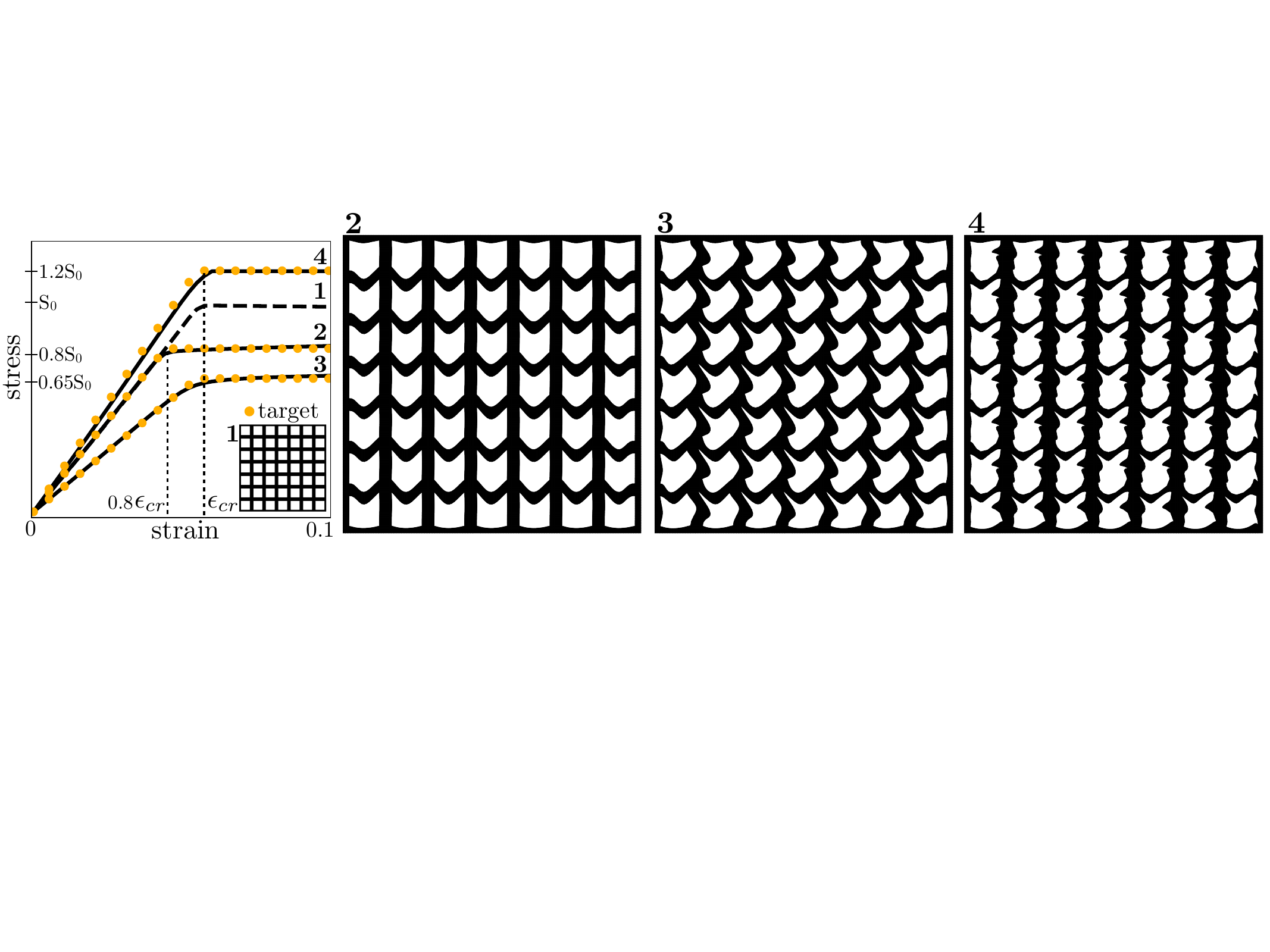}
\caption{Undeformed configurations (i.e., $\epsilon=0$) of three cellular solids designs with 50\% volume fraction and pore shapes respecting \texttt{p1} symmetry group, and their stress-strain responses with the corresponding target curves during a uniaxial compression of the top edge up to $\epsilon=0.1$.}
\label{fig:force_comp}   
\end{figure}

Next, we focus on tailoring cellular solids with negative Poisson's ratios under compression.
Here we aim to identify structures with ${\nu_{ef}=-0.2}$, i.e., requiring the material to shrink.
To ensure this design criterion, we customized the loss function to enforce points on $\Omega_l$ and $\Omega_r$ edges to have mean positive and negative displacements, respectively.
We were able to optimize a \texttt{p6} and \texttt{pg} symmetric cellular solids shown in \cref{fig:poisson_comp} with $\nu_{ef}=-0.15$ and $\nu_{ef}=-0.2$ under compression, respectively, that they also shrink.

\begin{figure}[h!] 
\centering
\includegraphics[width=1.0\textwidth]{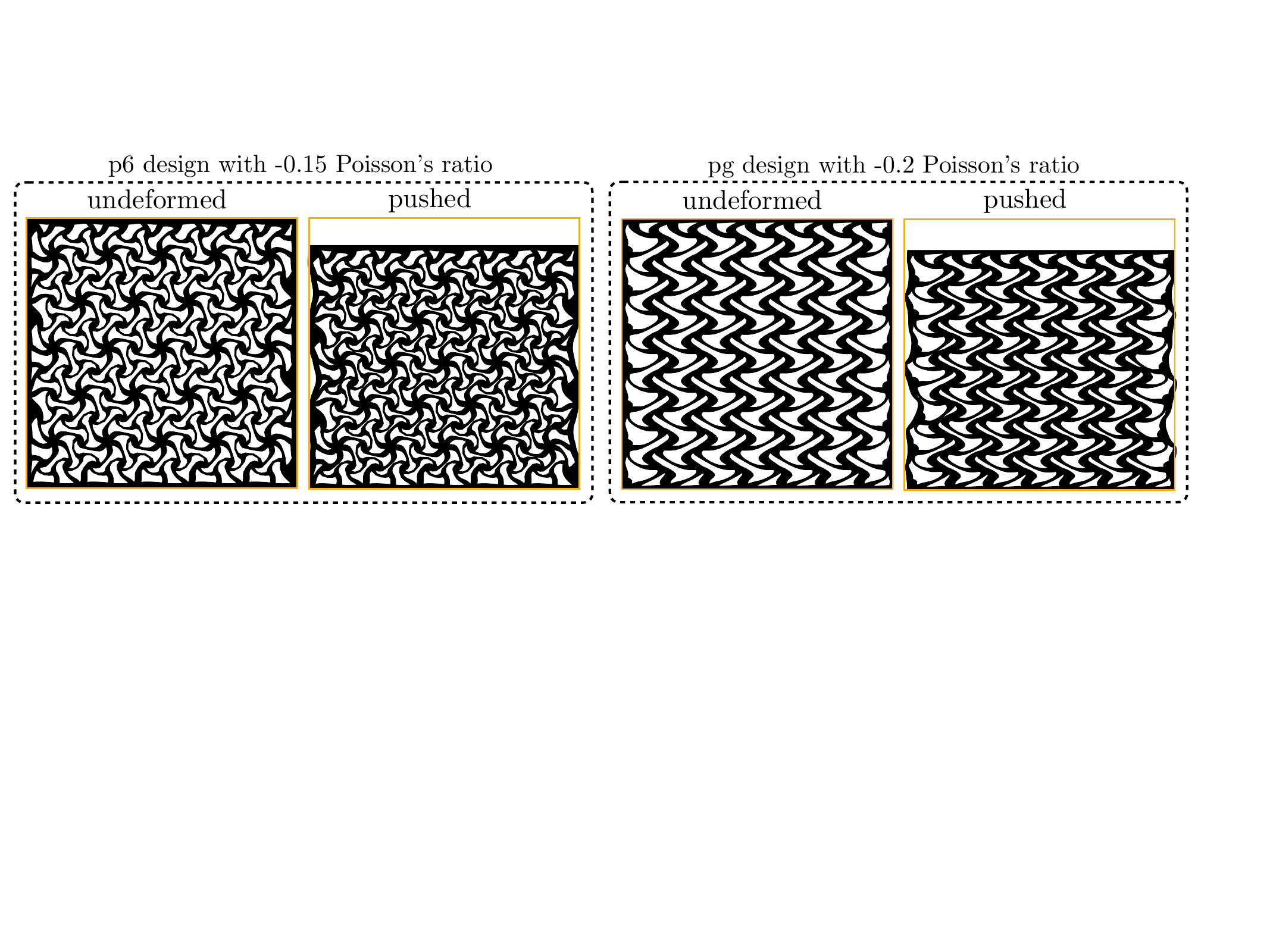}
\caption{Undeformed and deformed configurations of a \texttt{p6} and \texttt{pg} symmetric cellular solids designs that have negative Poisson's ratios of $-0.15$ and $-0.2$, respectively, under uniaxial compression.}
\label{fig:poisson_comp}   
\end{figure}

Finally, we examine the ability of our framework in designing cellular meta-materials with a more complicated mechanical response: achieving zero effective Poisson's ratios under both uniaxial compression and tension via having zero displacements on $\Omega_l$ and $\Omega_r$ edges.
\cref{fig:poisson_compten} shows the resulting successful structure that leverages \texttt{p6} symmetry group, indicating zero displacement at both left and right edges for either pushing or pulling of $10\%$ of the cellular solid height.

\begin{figure}[h!] 
\centering
\includegraphics[width=1.0\textwidth]{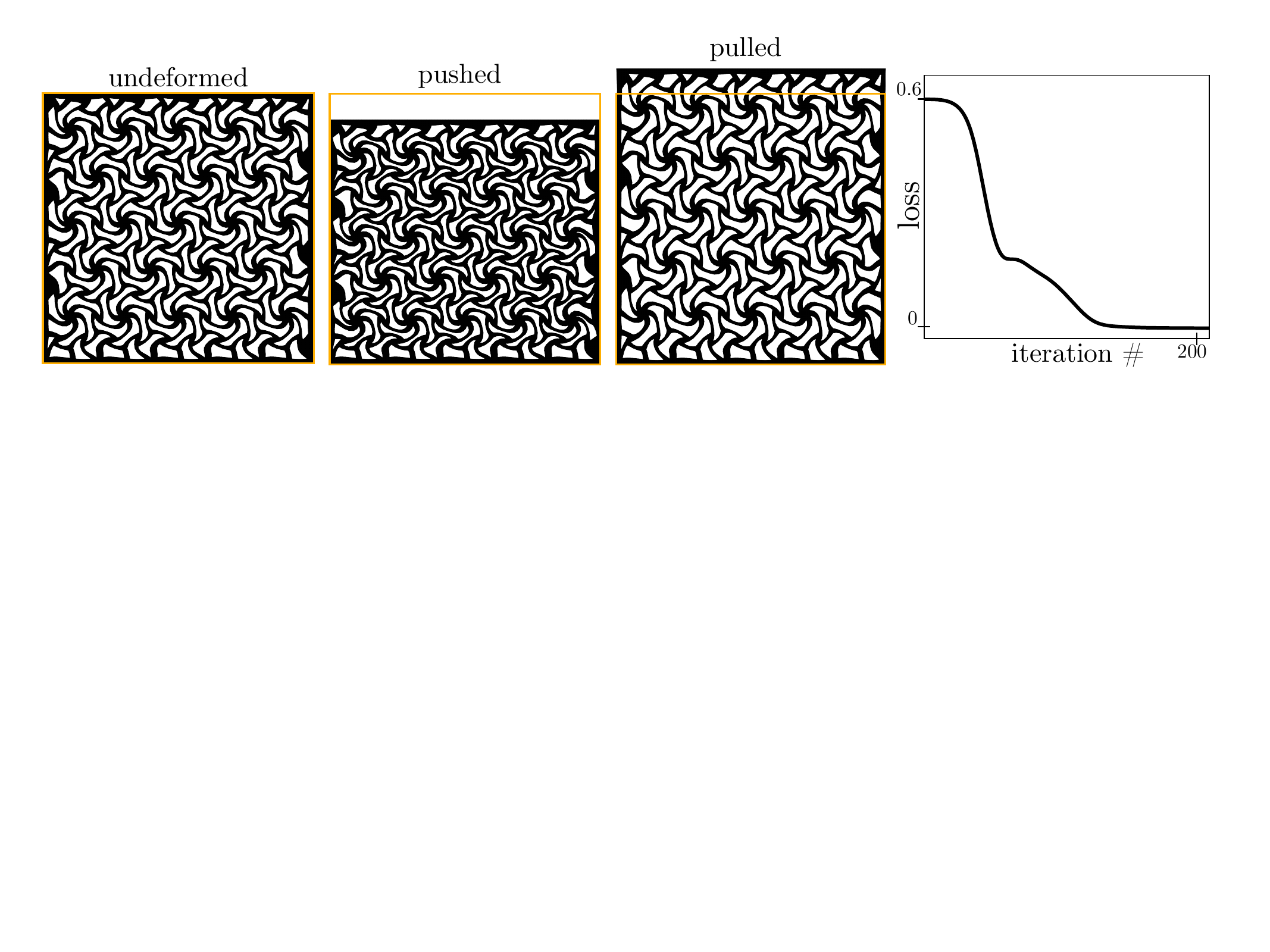}
\caption{Undeformed and deformed configurations of a cellular solid design with 50\% volume fraction and pore shapes respecting \texttt{p6} symmetry group indicating zero effective Poisson's ratios under both 10\% uniaxial pushing and pulling, with loss values during optimization.}
\label{fig:poisson_compten}   
\end{figure}

\section{Related Work}
Neural network flows have been extensively used in science and engineering.
\citet{Chenetal2018} developed neural networks as black box ODE solvers for time-series modeling and estimation.
\citet{hamiltonian_nn} and \citet{lagrangian_nn} proposed neural networks ODE solvers through learning the governing Hamiltonian and Lagrangian equations, respectively.
\citet{Moetal2024} proposed a physics-informed graph neural ODEs for spatiotemporal trajectory prediction.
Previous studies in cellular solids design are mainly focused on periodic structures constructed from translational symmetries of the optimized pore unit cells.
For example, ~\citet{WangEtal2014} utilized a level set method to design cellular meta-materials via unit cells topology optimization.
~\citet{XueMao2022} proposed an FEM-based mapped shape optimization technique to design periodic materials.
~\citet{Duetal2024} optimized pore structures through describing them by a moving morphable voids method.
Machine learning methods are becoming an emerging tool for both accelerating the design process and discovering new meta-materials beyond intuitions.
For example, \citet{MaEtal2020} applied machine learning tools for characterization of non-uniform arrangement of cells in meta-materials.
\citet{Haetal2023} leveraged generative machine learning techniques for inverse design of meta-materials with target stress-strain responses.
In this study, we propose an equivariant neural network flow to learn volume-preserving cellular solids via performing shape optimization of the pore structures while exploring all 2D crystallographic symmetry groups to find the optimal arrangement of the pores.

\section{Limitations and Future Work}

We introduce a machine learning framework that enables an efficient approach for inverse design of two-dimensional cellular solids with various nontrivial mechanical behaviors beyond the reach of existing methods.
Scaling up to three-dimensional cellular solids design with crystallographic symmetries would potentially unlock uncommon functionalities not observed from two dimensional meta-materials. 
But this requires efficient and faster solvers that our framework suffers from at large number of degrees of freedom in the mechanics modeling.
A future work could focus on speeding up the simulations leveraging learned surrogate models~\citep{sun2021amortized}, neural network-based order reduction techniques~\citep{BeatsonSurrogates}, and also amortized optimization methods~\citep{XueEtal2020}.
Since all crystallographic symmetry groups include translational symmetry, further computational cost reduction can also be achieved by simulating a single unit cell with periodic boundary conditions instead of modeling the entire cellular solid~\citep{Mizzietal2021}. 

\subsubsection*{Acknowledgments}
We would like to thank Deniz Oktay for his computational mechanics insights. This work was partially supported by NSF grants IIS-2007278 and OAC-2118201, NSF under grant number 2118201, the NSF under grant number 2127309 to the Computing Research Association for the CIFellows 2021 Project. PO is supported by the Gatsby Charitable Foundation.

\clearpage

\bibliography{iclr2025_conference}
\bibliographystyle{abbrvnat}

\clearpage
\appendix
\section{Appendix}

\subsection{The 17 Wallpaper Groups}\label{sec:wallpaper_groups}

The following figure illustrates the 17 distinct (up to isomorphy) space groups
on $\mathbb{R}^2$, also known as the wallpaper groups. In each figure, the region
marked red corresponds to the polytope $\Pi$ in \eqref{eq:tiling}. The letter F is inscribed
to mark orientation.
Figures from \citet{adams2023representing}.\\[1em]
\begin{minipage}{0.16\textwidth}
\includegraphics[width=\textwidth]{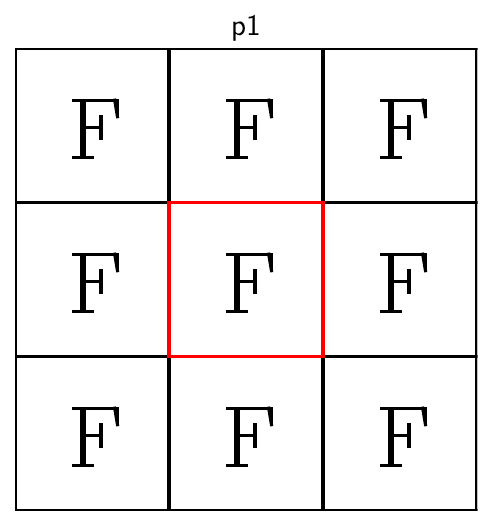}
\end{minipage}\hfill
\begin{minipage}{0.16\textwidth}
\includegraphics[width=\textwidth]{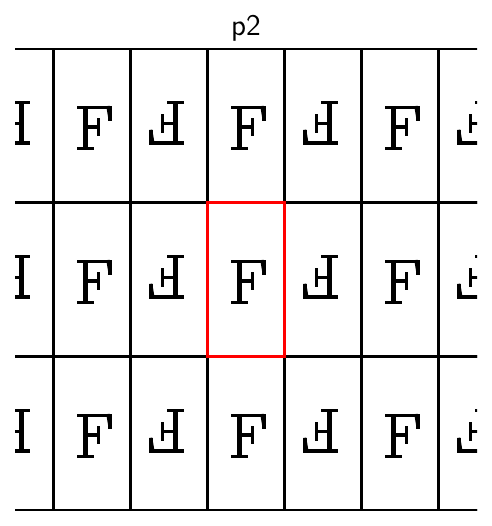}
\end{minipage}\hfill
\begin{minipage}{0.16\textwidth}
\includegraphics[width=\textwidth]{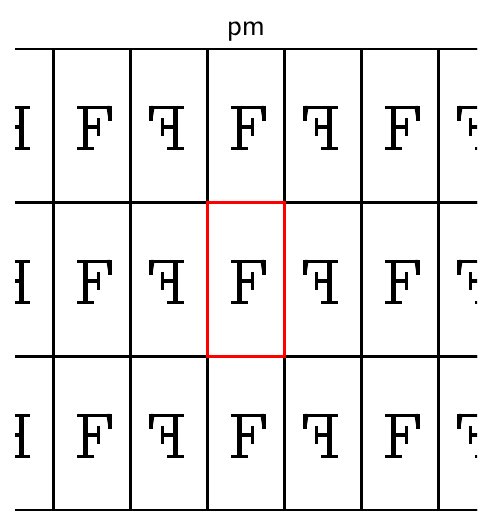}
\end{minipage}\hfill
\begin{minipage}{0.16\textwidth}
\includegraphics[width=\textwidth]{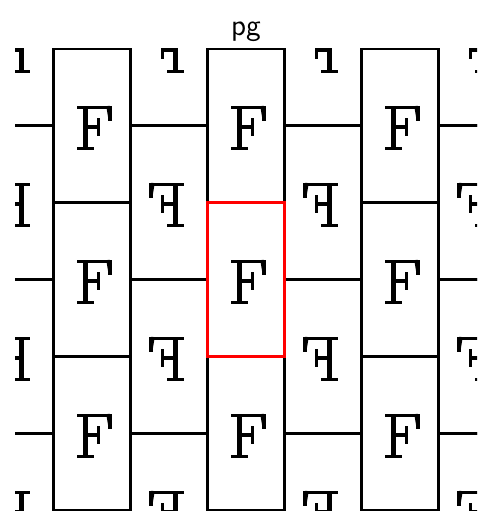}
\end{minipage}\hfill
\begin{minipage}{0.16\textwidth}
\includegraphics[width=\textwidth]{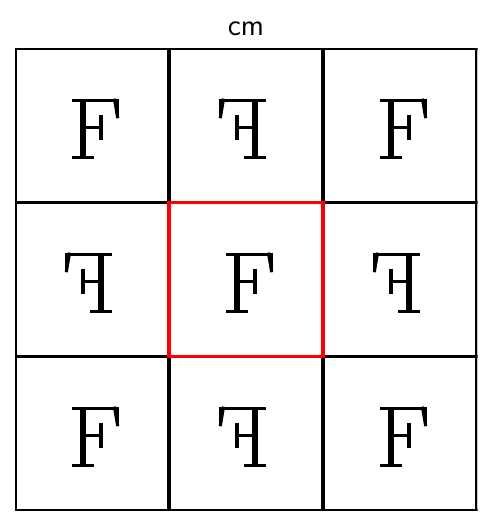}
\end{minipage}\hfill
\begin{minipage}{0.16\textwidth}
\includegraphics[width=\textwidth]{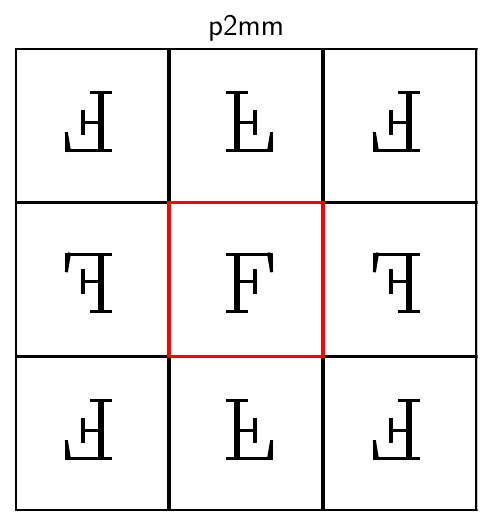}
\end{minipage}
\\
\begin{minipage}{0.16\textwidth}
\includegraphics[width=\textwidth]{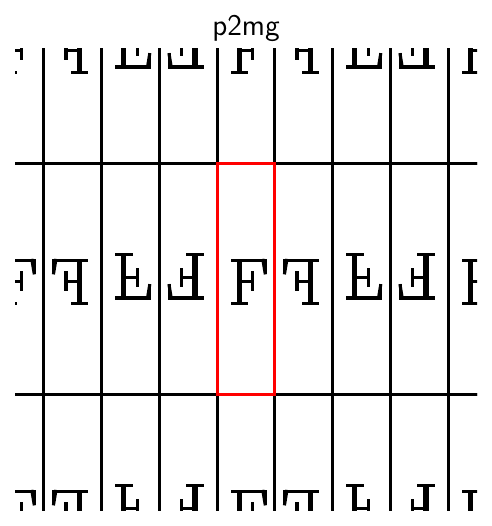}
\end{minipage}\hfill
\begin{minipage}{0.16\textwidth}
\includegraphics[width=\textwidth]{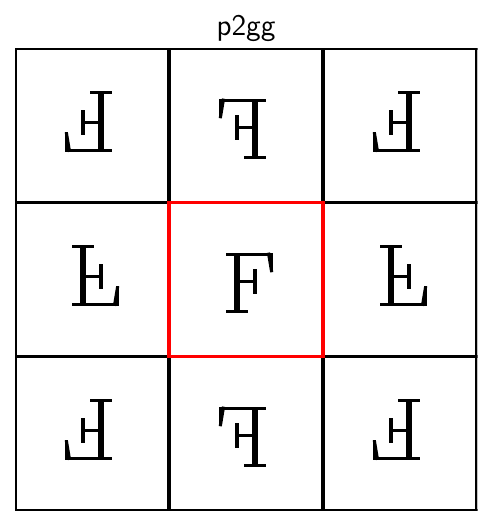}
\end{minipage}\hfill
\begin{minipage}{0.16\textwidth}
\includegraphics[width=\textwidth]{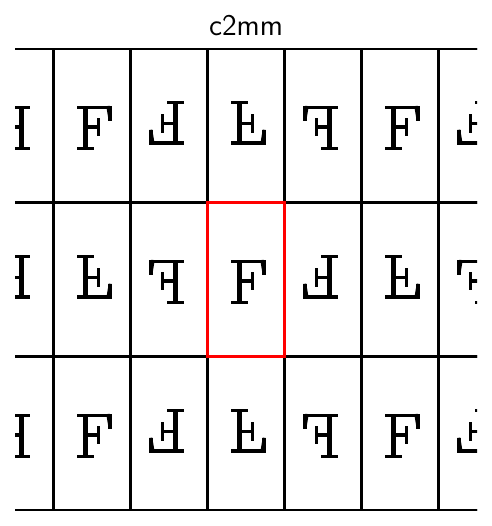}
\end{minipage}\hfill
\begin{minipage}{0.16\textwidth}
\includegraphics[width=\textwidth]{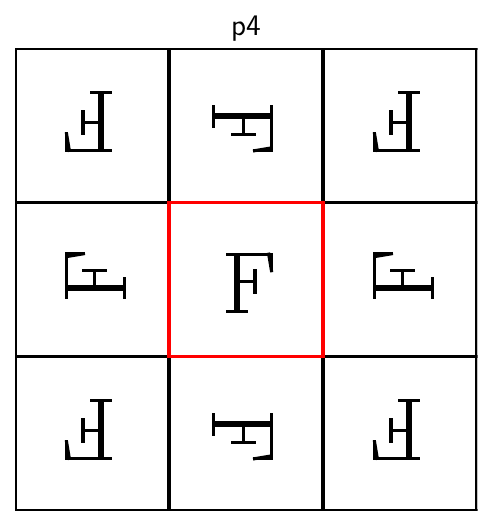}
\end{minipage}\hfill
\begin{minipage}{0.16\textwidth}
\includegraphics[width=\textwidth]{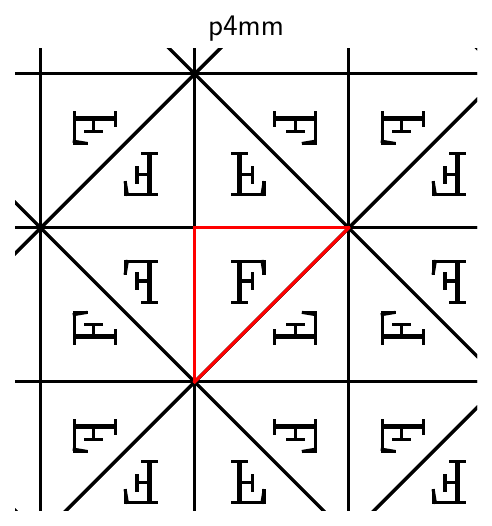}
\end{minipage}\hfill
\begin{minipage}{0.16\textwidth}
\includegraphics[width=\textwidth]{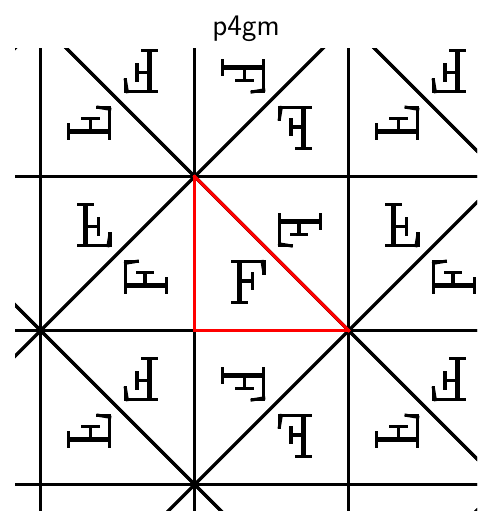}
\end{minipage}
\\
\begin{minipage}{0.16\textwidth}
\includegraphics[width=\textwidth]{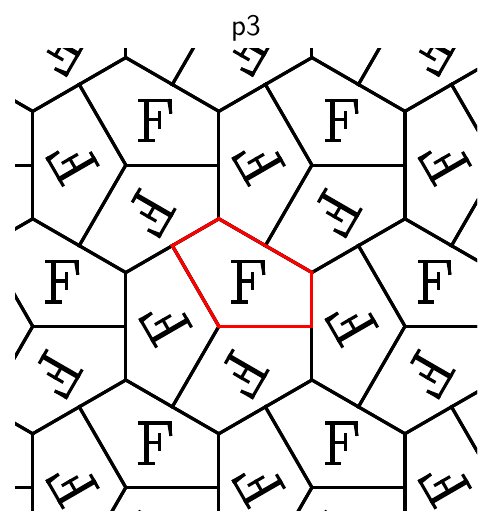}
\end{minipage}\hfill
\begin{minipage}{0.16\textwidth}
\includegraphics[width=\textwidth]{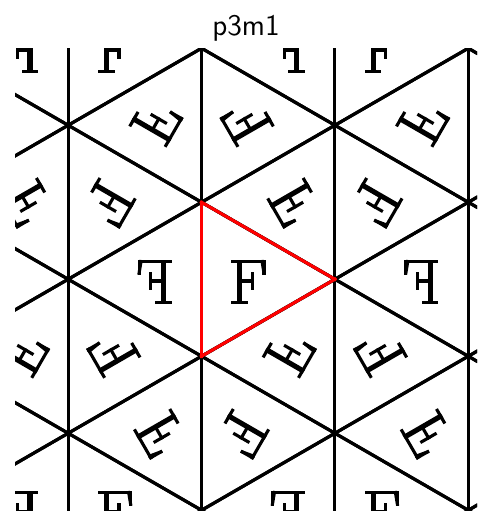}
\end{minipage}\hfill
\begin{minipage}{0.16\textwidth}
\includegraphics[width=\textwidth]{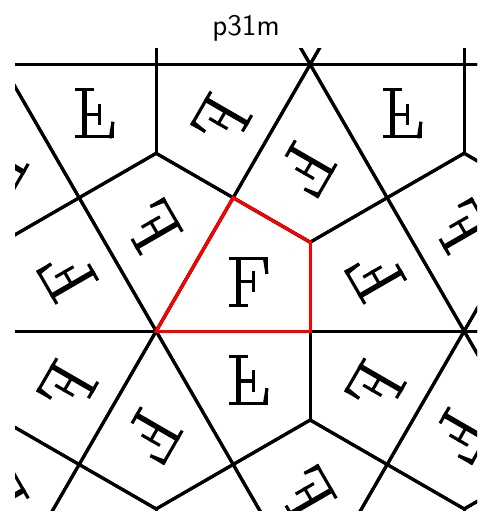}
\end{minipage}\hfill
\begin{minipage}{0.16\textwidth}
\includegraphics[width=\textwidth]{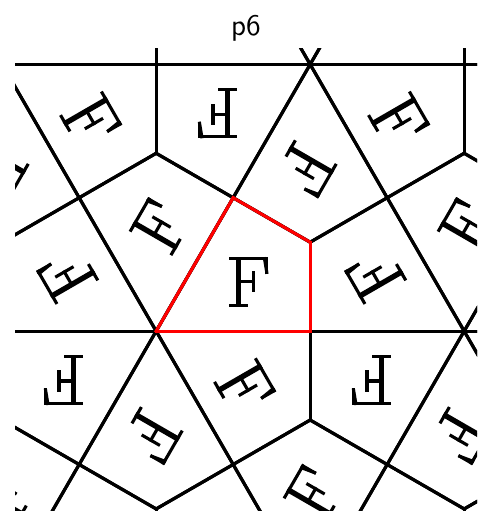}
\end{minipage}\hfill
\begin{minipage}{0.16\textwidth}
\includegraphics[width=\textwidth]{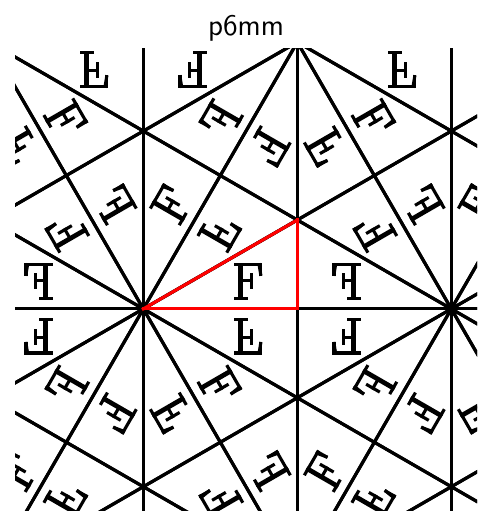}
\end{minipage}

\subsection{Proofs}

\subsubsection{Proof of Theorem \ref{result:semi:invariance}}
\label{sec:proof:semi:invariance}

This proof draws on one of the fundamental properties of space groups, namely that,
if $\group$ is a space group and $\mathbb{T}_\group$ its subgroup of pure translations,
the quotient group ${\group/\mathbb{T}_\group}$ is finite. See for example \citep{Vinberg:Shvarsman:1993}.

\setcounter{proofstep}{0}
\step We first show $\widehat{\mathbb{G}}$ is finite.
Consider the relation $\equiv$ on $\group$ defined as
\begin{equation*}
  \phi \equiv \psi\qquad:\Leftrightarrow\qquad
  \phi = \psi +\tau \quad\text{ for some }\tau\in\mathbb{T}_\group\;.
\end{equation*}
Since $\mathbb{T}_\group$ is an equivalence relation, and by definition of quotients,
the equivalence classes of $\equiv$ correspond to the elements of the quotient group ${\group/\mathbb{T}_\group}$. Since $\group$ is a space groups, this implies
$\equiv$ has a finite number of equivalence classes.
Recall that each element $\phi$ of $\group$ is of the form
\begin{equation*}
  \phi(x)\;=\;A_\phi x+b_\phi\;=\;A_{\phi}x+\msum_{i\leq n}c_i(\phi)b_i\;,
\end{equation*}
where ${c(\phi)=(c_1(\phi),\ldots,c_n(\phi))}$ is a vector in $\mathbb{R}^n$. It follows that there
are unique vectors ${\hat{c}(\phi)\in[0,1)^n}$ and ${\tilde{c}(\phi)\in\mathbb{Z}^n}$ such that
  ${c(\phi)=\hat{c}(\phi)+\tilde{c}(\phi)}$. The transformations 
  \begin{equation*}
    \hat{\phi}(x)\;:=\;A_{\phi}x+\msum_{i\leq n}\hat{c}_i(\phi)b_i
    \quad\text{ and }\quad
    \tau_\phi(x)\;:=\;x+\msum_{i\leq n}\tilde{c}_i(\phi)b_i
  \end{equation*}
  are hence indeed elements ${\hat{\phi}\in\widehat{\group}}$ and ${\tau_\phi\in\mathbb{T}_\group}$ that satisfy
  ${\phi=\hat{\phi}+\tau_\phi}$, and are the only such elements.
  It follows that each equivalence class of $\equiv$ contains exactly one element of $\widehat{\group}$,
  which shows the set $\widehat{\group}$ is indeed finite.
  \\[.5em]
  \step To verify the properties of $\Gamma$, we must first establish two simple properties of $\widehat{\mathbb{G}}$, namely
  \begin{equation}
    \label{eq:residual:hat}
    \text{(i) }\;
    \widehat{\mathbb{G}\psi}=\widehat{\mathbb{G}}
    \qquad\text{ and }\qquad
    \text{(ii) }\;
    \hat{\phi}\psi(\mathbf{x})
    \;=\;
    \widehat{\phi\psi}(\mathbf{x})+b_\tau\quad\text{ for some }\tau\in\mathbb{T}_0\;.
  \end{equation}
  Here, $\group\psi$ is the set obtained by composing each element of $\group$ with $\psi$ from the right,
  and $\widehat{\group\psi}$ denotes application of the definition of $\widehat{\group}$ to $\group\psi$.
  Since $\group$ is a group and $\psi$ one of its elements, we have ${\group\psi=\group}$, so
  property (i) holds.
  To verify (ii), observe that the ``hat operation'' $\hat{\phi}$ preserves the equivalence class of $\phi$:
  Since ${\phi=\hat{\phi}+\tau_{\phi}}$ and ${\tau_\phi\in\mathbb{T}_\group}$, we have ${\hat{\phi}\equiv\phi}$.
  If $\psi$ is a further group element, we also have ${\hat{\phi}\psi(x)=\phi(\psi x)-b_{\tau_\phi}}$
  and therefore ${\hat{\phi}\psi\equiv\phi\psi}$. In summary, we have
  \begin{equation*}
    \hat{\phi}\psi\;\equiv\;\phi\psi\;\equiv\;\widehat{\phi\psi}
  \end{equation*}
  which shows (ii)
  \\[.5em]
  \step
  Now consider a $\mathbb{T}_\group$-invariant function $f$. For any ${\psi\in\group}$, we have
  \begin{align*}
    \msum_{\hat{\phi}\in\widehat{\mathbb{G}}}\,A^{-1}_{\phi}f(\hat{\phi}\psi\mathbf{x})
    \;&=\;
    \msum_{\pi\in\widehat{\mathbb{G}}\psi}\,A^{-1}_{\pi\psi^{-1}}f(\pi\mathbf{x})
    && \text{ substitute }\pi:=\hat{\phi}\psi
    \\
    \;&=\;
    \msum_{\pi\in\widehat{\mathbb{G}}\psi}\,A^{-1}_{\pi\psi^{-1}}f(\widehat{\pi}\mathbf{x}+b_\tau)
    && \text{ by (\ref{eq:residual:hat}i), for some }\tau\in\mathbb{T}_0
    \\
    \;&=\;
    \msum_{\pi\in\widehat{\mathbb{G}}\psi}\,A^{-1}_{\pi\psi^{-1}}f(\widehat{\pi}\mathbf{x})
    && \text{ since $h$ is $\mathbb{T}_0$-invariant}
    \\
    \;&=\;
    A_{\psi}\msum_{\pi\in\widehat{\mathbb{G}}\psi}\,A^{-1}_{\pi}f(\widehat{\pi}\mathbf{x})
    && \text{ since ${A_{\pi\psi^{-1}}^{-1}=A_{\psi^{-1}}^{-1}A_{\pi}^{-1}=A_{\psi}A_{\pi}^{-1}}$}
        \\
    \;&=\;
    A_{\psi}\msum_{\widehat{\pi}\in\widehat{\mathbb{G}}}\,A^{-1}_{\pi}f(\widehat{\pi}\mathbf{x})
    && \text{ by (\ref{eq:residual:hat}ii)}\;.
  \end{align*}
  It follows that
  \begin{equation*}
    (\Gamma f)(\psi x)
    \;=\;
    \msum_{\hat{\phi}\in\widehat{\mathbb{G}}}\,A^{-1}_{\phi}f(\hat{\phi}\psi\mathbf{x})
    \;=\;
    A_{\psi}\msum_{\widehat{\phi}\in\widehat{\mathbb{G}}}\,A^{-1}_{\phi}f(\widehat{\phi}\mathbf{x})
    \;=\;
    A_\psi(\Gamma f)(x)
  \end{equation*}
  for any ${\psi\in\group}$.
  \qed

  \subsubsection{Proof of Theorem \ref{result:flows}}
  \label{sec:proof:flows}
  
  The existence and uniqueness of solutions of \eqref{eq:ode} is given by the Picard-Lindel\"of theorem
  \citep[e.g.][]{Deuflhard:Bornemann:2002}. Here is a statement that suffices for our purposes:
  
  \begin{fact}[Version of the Picard-Lindel\"of theorem]\label{fact:picard}
    Fix ${n,k\in\mathbb{N}}$ and a bounded interval ${I:=[0,t_{max}]}$. Let
    ${H:\mathbb{R}^n\times I\to\mathbb{R}^d}$ be a Lipschitz function that is ${k\geq 1}$ times continuously differentiable
    in its second argument. Then
    \begin{equation*}
      F(x_0,t)\;:=\;x_0\,+\,\mint_0^t H(F(x_0,s),s)ds
    \end{equation*}
    defines a function ${F:\mathbb{R}^n\times I\to\mathbb{R}^d}$ that is continuous in its first argument,
    ${k+1}$ times continuously differentiable in the second, and is the unique solution of \eqref{eq:ode}.
  \end{fact}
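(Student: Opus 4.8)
The plan is to separate the claim into three parts---existence and uniqueness, continuity in the initial condition, and time-regularity---since each consumes a different part of the hypotheses. Throughout I would work with the integral form displayed in the statement, which is equivalent to the differential equation \eqref{eq:ode} together with $F(x_0,0)=x_0$: a continuous $F$ solves one if and only if it solves the other, by the fundamental theorem of calculus. So it suffices to analyze the integral equation.

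For existence and uniqueness, fix $x_0$ and define the Picard operator $(\mathcal{T}\phi)(t):=x_0+\int_0^t H(\phi(s),s)\,ds$ on the Banach space of continuous curves $I\to\mathbb{R}^n$. The only hypothesis needed here is that $H$ is Lipschitz in its first argument, say with constant $L$ uniform in $t$. Rather than the classical local-existence-plus-continuation argument, I would equip the curve space with the weighted (Bielecki) norm $\|\phi\|_\beta:=\sup_{t\in I}e^{-\beta t}\|\phi(t)\|$ for some $\beta>L$; a one-line estimate using the Lipschitz bound shows $\mathcal{T}$ is a contraction in this norm on all of $I$ at once, so the Banach fixed-point theorem yields a unique fixed point, which is $F(x_0,\cdot)$. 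This simultaneously gives existence and the uniqueness assertion.

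For continuity in the first argument, I would subtract the integral equations for two initial points $x_0,x_0'$, apply the Lipschitz bound inside the integral, and close the resulting inequality with Gr\"onwall to obtain $\|F(x_0,t)-F(x_0',t)\|\le\|x_0-x_0'\|e^{Lt}$ uniformly for $t\in I$. This shows $F$ is (Lipschitz, hence) continuous in $x_0$.

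The main obstacle is the time-regularity claim. The base case is free: since the integrand $s\mapsto H(F(x_0,s),s)$ is continuous, the integral form gives $F(x_0,\cdot)\in C^1$ with $\partial_t F(x_0,t)=H(F(x_0,t),t)$. The difficulty is the bootstrap that upgrades this to $C^{k+1}$. The scheme is an induction: if $F(x_0,\cdot)\in C^j$, I want the integrand $t\mapsto H(F(x_0,t),t)$ to be $C^{\min(j,k)}$, so that its integral is $C^{\min(j,k)+1}$; iterating from $j=1$ raises the order by one at each step until it stabilizes at $C^{k+1}$. Carrying out each increment requires differentiating the composition via the chain rule, $\tfrac{d}{dt}H(F,t)=(\partial_x H)(F,t)\,\partial_t F+(\partial_t H)(F,t)$, and then checking that every term just produced is continuous before differentiating again. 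This is exactly where care is needed: the Lipschitz hypothesis alone controls only the first derivative, and the inductive step genuinely uses differentiability of $H$ in both arguments (the construction of $H$ in \cref{result:flows} supplies this spatial smoothness), so I would organize the argument to record precisely where smoothness of $H$ beyond mere Lipschitz continuity in the first variable is invoked.
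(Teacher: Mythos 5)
The paper never proves \cref{fact:picard}: it is stated as a convenient paraphrase of the Picard--Lindel\"of theorem with a citation to Deuflhard and Bornemann, so your attempt is measured against the standard argument rather than against a proof in the text. Your architecture is that standard argument, and the first two thirds are sound: the equivalence of \eqref{eq:ode} with the integral equation via the fundamental theorem of calculus, the Bielecki norm $\|\phi\|_\beta=\sup_{t\in I}e^{-\beta t}\|\phi(t)\|$ with $\beta>L$ making the Picard operator a contraction on all of $I$ at once (a clean alternative to local existence plus continuation, legitimate here because $H$ is globally Lipschitz and $I$ is compact), and Gr\"onwall giving $\|F(x_0,t)-F(x_0',t)\|\leq e^{Lt}\|x_0-x_0'\|$, which is more than the claimed continuity in the first argument. (One trivial repair you made silently: the Fact's codomain $\mathbb{R}^d$ must have $d=n$ for $H(F(x_0,s),s)$ to typecheck.)

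The substantive point is the one you flagged but left unsettled, and it deserves to be settled: the hypotheses of the Fact as literally stated are \emph{insufficient} for the $C^{k+1}$ time-regularity claim, not merely inconvenient for your bootstrap. Take $n=1$ and $H(x,t)=|x|+1$, which is Lipschitz and (being $t$-independent) infinitely continuously differentiable in its second argument. The solution with $F(-1,0)=-1$ is $F=1-2e^{-t}$ until it crosses zero at $t^*=\ln 2$ and $F=e^{t-t^*}-1$ afterwards; then $\dot F$ is continuous but $\ddot F(t^{*-})=-1$ while $\ddot F(t^{*+})=+1$, so $F(x_0,\cdot)$ is $C^1$ but not $C^2$ whenever $\tmax>\ln 2$. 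Hence no proof of the Fact as stated can exist, and your caution about the chain-rule step was exactly right: the inductive increment $\frac{d}{dt}H(F,t)=(\partial_x H)(F,t)\,\dot F+(\partial_t H)(F,t)$ requires continuous differentiability of $H$ in the spatial variable up to order $k$ (jointly with $t$), and with that added hypothesis your induction closes and yields $C^{k+1}$. Your parenthetical that \cref{result:flows} supplies this spatial smoothness is right in spirit but note the paper is loose in the same way there too: the theorem's wording asks only that $h$ be Lipschitz and smooth in $t$, while spatial differentiability enters implicitly through the definition of $\Lambda$ (which presupposes twice continuous differentiability) and through $\rho$ being real-analytic; in the implementation $h$ is a $\tanh$ network, hence smooth. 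This repair is not pedantry: the proof of \cref{lemma:flow} invokes the Fact to Taylor-expand $F$ to second order in $t$, so the $C^2$ conclusion is actually used downstream.
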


  We break the main ingredients of the proof of \cref{result:flows} up into three lemmas. First, we show
  that a function $H$ that satisfies \eqref{semi:invariance} yields flows with the symmetry properties we require.  
  \begin{lemma}
  \label{lemma:flow}
  Let $\mathbb{G}$ be a group of isometries of $\mathbb{R}^n$, and fix some ${\tmax>0}$.
  Let ${H:\mathbb{R}^n\times I\to\mathbb{R}^n}$ be a function is Lipschitz, and $k$ times continuously differentiable in its second argument, for ${k\geq 1}$.
  Then there is a unique function ${F:\mathbb{R}^n\times[0,\tmax]\rightarrow \mathbb{R}^n}$ that satisfies~\eqref{eq:ode} for all $(x,t)$ in its domain,
  and this function satisfies
  \begin{equation}
    \label{equivariant:flow}
    F(\phi x,t)\;=\;\phi F(x,t)\qquad\text{ for all }\phi\in\mathbb{G}\text{, all }x\in \mathbb{R}^n\text{ and }0\leq t\leq\tmax
  \end{equation}
  if and only if $H$ satisfies \eqref{semi:invariance}.
  The solution $F$ is continuous in its first argument, and ${(k+1)}$ times continuously differentiable in the second.
\end{lemma}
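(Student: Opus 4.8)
The plan is to prove both implications by exploiting the \emph{uniqueness} of solutions to the initial value problem \eqref{eq:ode} guaranteed by Fact~\ref{fact:picard}. Throughout I would fix $\phi\in\mathbb{G}$, write $\phi x=A_\phi x+b_\phi$, and use only that $\phi$ is affine with $A_\phi$ a constant (orthogonal) matrix; no further space-group structure is needed. The regularity assertions—continuity of $F$ in its first argument and $(k+1)$-fold differentiability in its second—follow directly from Fact~\ref{fact:picard} and require no separate argument.

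For the sufficiency direction (\eqref{semi:invariance} $\Rightarrow$ \eqref{equivariant:flow}), I would assume $H$ satisfies \eqref{semi:invariance}, fix an arbitrary $x_0\in\mathbb{R}^n$, and compare the curve $t\mapsto F(\phi x_0,t)$—which by definition solves \eqref{eq:ode} with initial value $\phi x_0$—against the candidate $G(t):=\phi F(x_0,t)$. The key is to verify that $G$ solves the \emph{same} initial value problem: one has $G(0)=\phi F(x_0,0)=\phi x_0$, and
\begin{equation*}
  \tfrac{d}{dt}G(t)=A_\phi\,\tfrac{d}{dt}F(x_0,t)=A_\phi H(F(x_0,t),t)=H(\phi F(x_0,t),t)=H(G(t),t),
\end{equation*}
where the third equality is precisely \eqref{semi:invariance} applied with $x=F(x_0,t)$. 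Since $G$ and $F(\phi x_0,\cdot)$ solve the same Lipschitz IVP, uniqueness in Fact~\ref{fact:picard} forces $G=F(\phi x_0,\cdot)$, which is exactly \eqref{equivariant:flow}.

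For the necessity direction (\eqref{equivariant:flow} $\Rightarrow$ \eqref{semi:invariance}), I would differentiate the identity $F(\phi x,t)=A_\phi F(x,t)+b_\phi$ in $t$ for fixed $x$. The left-hand side yields $H(F(\phi x,t),t)=H(\phi F(x,t),t)$, using equivariance inside the argument, while the right-hand side yields $A_\phi H(F(x,t),t)$. Setting $y:=F(x,t)$ gives $H(\phi y,t)=A_\phi H(y,t)$, and it remains to see that $y$ ranges over all of $\mathbb{R}^n$ as $x$ varies, for each fixed $t$.

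The main obstacle is exactly this surjectivity. At $t=0$ the image of $x\mapsto F(x,t)$ is all of $\mathbb{R}^n$ since $F(x,0)=x$, but for $t>0$ I must argue the time-$t$ flow map is onto. I would obtain this from invertibility of the flow: because $H$ is Lipschitz on all of $\mathbb{R}^n\times I$, the backward problem $\tfrac{d}{ds}\tilde G(s)=-H(\tilde G(s),t-s)$ with $\tilde G(0)=y$ has a unique solution on $[0,t]$, and $x:=\tilde G(t)$ satisfies $F(x,t)=y$. Hence $x\mapsto F(x,t)$ is surjective for every $t\in I$, so $H(\phi y,t)=A_\phi H(y,t)$ holds for all $y\in\mathbb{R}^n$ and $t\in I$, establishing \eqref{semi:invariance} and completing the equivalence.
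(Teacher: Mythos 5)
Your proof is correct. The sufficiency direction is the same argument as the paper's: you compare $t\mapsto\phi F(x_0,t)$ with $t\mapsto F(\phi x_0,t)$ and invoke uniqueness of the Lipschitz IVP (the paper phrases this with $F^{\phi}(x,t):=\phi^{-1}F(\phi x,t)$, which is equivalent). The necessity direction is where you genuinely diverge. The paper Taylor-expands $F(x,\cdot)$ around $t=0$, uses the initial condition $F(x,0)=x$ to get \eqref{semi:invariance} at $t=0$, and then propagates to general $t$ by considering the time-shifted flow $F^t(x,s)=F(x,t+s)$. You instead differentiate the equivariance identity $F(\phi x,t)=A_\phi F(x,t)+b_\phi$ in $t$ directly, obtaining $H(\phi y,t)=A_\phi H(y,t)$ for $y$ in the range of $x\mapsto F(x,t)$, and then close the gap by proving that range is all of $\mathbb{R}^n$ via the reversed ODE $\tfrac{d}{ds}\tilde G(s)=-H(\tilde G(s),t-s)$. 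This is a cleaner route, and it is also more careful on a point the paper elides: the paper's time-shift argument yields the identity only at points of the form $F(x,t)$ as well (since the shifted flow's ``initial value'' at $s=0$ is $F(x,t)$, not $x$), so it implicitly relies on exactly the surjectivity of the time-$t$ map that you prove explicitly. The only cost of your approach is that you need $t\mapsto F(x,t)$ to be differentiable, which Fact~\ref{fact:picard} supplies, so nothing is lost; both arguments use only that $\phi$ is affine with constant linear part, as you note.
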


\begin{proof}
  We first construct the solution ${F:\mathbb{R}^n\times I\to \mathbb{R}^n}$ on the larger interval ${I=[-1,\tmax]}$. (This is to ensure
  that $0$ is in the interior of $I$, which we need for step 3 below.) The restriction of $F$ to ${\mathbb{R}^n\times[0,\tmax]}$
  is then solution in the statement of the result.
  \\[.5em]
  \step   
  Existence, uniqueness, and smoothness of $F$
  on $\mathbb{R}^n\times I$ then follow immediately from Fact~\ref{fact:picard}, where we set ${\tmin=-1}$ and ${t_0=0}$.
  Fact~\ref{fact:picard} also shows that $F$ solves
  \begin{equation}
    \label{flow:ode}
    \mfrac{d}{dt}F(x,t)\;=\;H(F(x,t),t)\quad\text{ with }\quad F(x,0)\;=\;x\;.
  \end{equation}
  What we have to show is that $F$ is equivariant if and only if $H$ satisfies \eqref{semi:invariance}.
  \\[.5em]
  \step Suppose $H$ satisfies \eqref{semi:invariance}. Write ${F^{\phi}(x,t):=\phi^{-1}F(\phi x,t)}$.
  Thus, $F$ is equivariant iff ${F^{\phi}=F}$ for all $\phi$.
  \begin{align*}
    \mfrac{d}{dt}F^{\phi}(x,t)
    \;&=\;
    \mfrac{d}{dt}(A_{\phi^{-1}}F(\phi x,t)+b_{\phi^{-1}})
    && \text{ since }\phi^{-1} x=A_{\phi^{-1}}x+b_{\phi^{-1}}\\
    \;&=\;A_{\phi^{-1}}\mfrac{d}{dt}F(\phi x,t)
    &&
    \text{ $A_{\phi^{-1}}$ and $b_{\phi^{-1}}$ do not depend on $t$}
    \\[.2em]
    &=\;A_{\phi^{-1}}H(F(\phi x,t),t)
    &&
    \text{ by \eqref{flow:ode}}\\
    \;&=\;H(\phi^{-1}F(\phi x,t),t)
    &&
    \text{ by \eqref{semi:invariance} }\\
    &=\;H(F^{\phi}(x,t),t) && \text{ definition of $F^{\phi}$.}
  \end{align*}
  Substituting the initial condition into the definition of $F^{\phi}$ shows
  \begin{equation*}
    F^{\phi}(x,0)\;=\;\phi^{-1}(F(\phi x,0))\;=\;\phi^{-1}\phi x\;=\;x
  \end{equation*}
  Thus, both $F$ and $F^{\phi}$ solve the differential equation \eqref{flow:ode} for the initial condition $x$.
  Since the solution is unique by Fact~\ref{fact:picard}, it follows that ${F(x,t)=F^{\phi}(x,t)}$ for all $t$.
  In summary, \eqref{semi:invariance} implies equivariance of $F$.
  \\[.5em]
  \step Conversely, assume $F$ is equivariant, so ${F^{\phi}=F}$. 
  Since $H$ is continuously differentiable, Fact~\ref{fact:picard} shows ${t\mapsto F(x,t)}$ is twice continuously differentiable.
  Its Taylor expansion around any $t$ in the interior ${(-1,\tmax)}$ is hence
  \begin{equation}
    \label{eq:taylor}
    \begin{split}
    F(x,t+\epsilon)\;&=\;F(x,t)\,+\,\frac{d}{dt}F(x,t)\cdot\epsilon\,+\,o(\epsilon^2)\\
    &=\;F(x,t)\,+\,H(F(x,t),t)\cdot\epsilon\,+\,o(\epsilon^2) \qquad\text{ for each }x\in \mathbb{R}^n\text{ and }\epsilon>0\;.
    \end{split}
  \end{equation}
  If we set ${t=0}$ and use the initial condition ${F(x,0)=x}$, we obtain
  \begin{equation*}
    \phi F(x,\epsilon)
    \;=\;
    A_\phi(x+H(x,0)\cdot\epsilon\,+\,o(\epsilon^2))+b_\phi
    \;=\;
    \phi x+A_{\phi}H(x,0)\cdot\epsilon\,+\,o(\epsilon^2)\;,
  \end{equation*}
  where we note that ${A_\phi o(\epsilon^2)=o(\epsilon^2)}$ since $A_\phi$ is an isometry.
  Substituting $\phi x$ into
  \eqref{eq:taylor} shows
  \begin{equation*}
    F(\phi x,\epsilon)
    \;=\;
    \phi x+H(\phi x,0)+o(\epsilon^2)
  \end{equation*}
  Since ${F^{\phi}=F}$, we hence have
  \begin{equation*}
    \phi x+H(\phi x,0)\cdot\epsilon+o(\epsilon^2)
    \;=\;
    \phi x+A_{\phi}H(x,0)\cdot\epsilon\,+\,o(\epsilon^2)
  \end{equation*}
  and therefore
    \begin{equation*}
    H(\phi x,0)
    \;=\;
    A_{\phi}H(x,0)\,+\,o(\epsilon)
  \end{equation*}
    Since that is true for all $\epsilon$, it follows that ${H(\phi x,0)=A_\phi H(x,0)}$.
    In summary, equivariance of $F$ implies that \eqref{semi:invariance} holds at $t=0$.
    Since $F$ is a flow, the flow ${F^t(x,s):=F(x,t+s)}$ satisfies \eqref{eq:ode} 
    for the function  ${H^t(x,s):=H(x,t+s)}$ and the initial value ${x_0^t:=F(x,t)}$.
    If $F$ is equivariant, so is $F^t$. We can hence apply the same argument
    at ${s=0}$, which shows ${H(\phi x,t)=A_\phi H(x,t)}$.    
\end{proof}
  
Given a differentiable function ${f:\mathbb{R}^n\to\mathbb{R}^m}$, we denote
by ${Df}$ its differential. If ${m=n}$, this is the Jacobian matrix, and if ${m=1}$, the transpose ${(Df)^{\trans}}$
is the gradient of $f$.

\begin{lemma}
  \label{lemma:differential}
  For any differentiable function
  ${g:\mathbb{R}^{2n}\to\mathbb{R}^m}$ and any ${m\in\mathbb{N}}$, the differential
  ${D(g\circ\rho)}$ is $\mathbb{T}_\group$-invariant.
  In particular, $D\rho$ is $\mathbb{T}_\group$-invariant.
\end{lemma}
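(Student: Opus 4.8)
The plan is to isolate the single fact that does all the work: \emph{the differential of a $\mathbb{T}_\group$-invariant differentiable map is itself $\mathbb{T}_\group$-invariant}. Once this is established, both assertions follow immediately, since $g\circ\rho$ inherits $\mathbb{T}_\group$-invariance from $\rho$, and the ``in particular'' clause is the special case $g=\mathrm{id}$ with $m=2n$.

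First I would record that $\rho$ is $\mathbb{T}_\group$-invariant, which the text has already set up: $\tilde{\rho}$ is $\mathbb{Z}^n$-invariant and $B^{-1}\mathbb{T}_\group=\mathbb{Z}^n$, so for every $\tau\in\mathbb{T}_\group$ we have $\rho(\tau x)=\tilde{\rho}(B^{-1}\tau x)=\tilde{\rho}(B^{-1}x+B^{-1}b_\tau)=\tilde{\rho}(B^{-1}x)=\rho(x)$, because $B^{-1}b_\tau\in\mathbb{Z}^n$. Consequently, for any differentiable $g$, the composite $f:=g\circ\rho$ satisfies $f\circ\tau=f$ for all $\tau\in\mathbb{T}_\group$, since $(g\circ\rho)(\tau x)=g(\rho(\tau x))=g(\rho(x))=(g\circ\rho)(x)$.

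Next I would prove the key fact. Fix $\tau\in\mathbb{T}_\group$; because every element of $\mathbb{T}_\group$ is a pure translation $\tau(x)=x+b_\tau$, its Jacobian is the identity, $D\tau\equiv I$. Differentiating the identity $f\circ\tau=f$ and applying the chain rule gives $Df(\tau x)\,D\tau(x)=Df(x)$, hence $Df(\tau x)=Df(x)$ for all $x$; that is, $Df$ is $\mathbb{T}_\group$-invariant. Applying this with $f=g\circ\rho$ proves the lemma, and the choice $g=\mathrm{id}$ (so $f=\rho$) yields the invariance of $D\rho$.

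The only point requiring care is the one on which the whole argument rests: the elements of $\mathbb{T}_\group$ are translations, so their Jacobian is \emph{exactly} the identity rather than some orthogonal matrix $A_\tau$. For a general $\phi\in\group$ with nontrivial rotational part $A_\phi$, the analogous computation instead gives $Df(\phi x)=Df(x)\,A_\phi^{-1}$, i.e.\ equivariance rather than invariance; it is precisely the vanishing of the linear part for pure translations that upgrades this to strict invariance. There is no serious obstacle here—the statement is essentially the observation that the derivative of a periodic function is periodic—so I would keep the proof to the few lines above.
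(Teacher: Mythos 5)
Your proof is correct, and it takes a genuinely different route from the paper's. The paper computes the Jacobian of $\rho$ explicitly, observes that its entries are themselves functions of the components of $\rho$ (derivatives of sines are cosines and vice versa), and writes $D\rho = M\circ\rho$; the chain rule then gives $D(g\circ\rho) = (Dg\cdot M)\circ\rho$, which is $\mathbb{T}_\group$-invariant because it factors through the maximal invariant $\rho$. You instead differentiate the invariance identity $f\circ\tau = f$ directly and use the single fact that pure translations have identity Jacobian, so $Df(\tau x)\,D\tau(x) = Df(x)$ collapses to $Df(\tau x) = Df(x)$. Your argument is more elementary (no explicit computation of $D\rho$ is needed) and strictly more general: it shows the differential of \emph{any} $\mathbb{T}_\group$-invariant differentiable map is $\mathbb{T}_\group$-invariant, which is all that the downstream application in the proof of Theorem~\ref{result:flows} actually uses. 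What the paper's version buys in exchange is the slightly stronger structural statement that $D(g\circ\rho)$ not only is invariant but can itself be written as a function composed with $\rho$, i.e.\ it again factors through the maximal invariant; that form is not needed here, but it is the natural statement if one wants to iterate the construction. Your closing remark correctly identifies why the argument is specific to the translation subgroup: for general $\phi\in\group$ one only gets $Df(\phi x) = Df(x)A_\phi^{-1}$, which is exactly the equivariance handled separately by the operator $\Gamma$ in Theorem~\ref{result:semi:invariance}.
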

\begin{proof}
Since the Jacobian matrix of $\rho$ has entries
\begin{equation*}
  (D\rho(x))_{ij}
  \;=\;
  \begin{cases}
    \phantom{-}2\pi\rho_{2i-1}(x) & \text{ if }j=2i-1\\
    -2\pi\rho_{2i}(x) & \text{ if }j=2i\\
    \phantom{-}0 & \text{ otherwise}
  \end{cases}
\end{equation*}
it can be written as a function ${D\rho=M\circ\rho}$ of $\rho$, where 
${M:\mathbb{R}^{2n}\rightarrow\mathbb{R}^{n\times 2n}}$. That shows
\begin{equation*}
  D(g\circ\rho)(x)\;=\;(Dg)(\rho(x))\cdot(D\rho)(x)\;=\;(Dg\cdot M)(\rho(x))\;,
\end{equation*}
and hence ${D(g\circ\rho)\circ\phi=(Dg\cdot M)\circ\rho\circ\phi=(Dg\cdot M)\circ\phi}$.
\end{proof}

Our final auxiliary result shows that the symmetrization operator $\Gamma$ does not increase
divergence. For \cref{result:flows}, this means that if a function $f$ is divergence-free, then so is
$\Gamma f$. 

\begin{lemma}[$\Gamma$ does not increase divergence]
  \label{lemma:divergence}
  If $\mathbb{G}$ is a space group on $\mathbb{R}^n$, and ${f:\mathbb{R}^n\rightarrow\mathbb{R}^n}$ is a
  continuously differentiable function, then 
  \begin{equation*}
    \big\|\,\divergence(\Gamma f)\,\big\|_{\sup}\;\leq\;\big\|\,\divergence f\,\big\|_{\sup}\;.
  \end{equation*}
\end{lemma}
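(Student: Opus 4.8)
The plan is to compute the divergence of $\Gamma f$ directly from its definition and exploit the special structure of the matrices $A_\phi$, which are orthogonal. Recall that
\begin{equation*}
  (\Gamma f)(x)\;=\;\frac{1}{|\widehat{\mathbb{G}}|}\msum_{\phi\in\widehat{\mathbb{G}}}A_\phi^{-1}f(\phi x)\;.
\end{equation*}
Since divergence is linear, $\divergence(\Gamma f)(x)=\frac{1}{|\widehat{\mathbb{G}}|}\sum_{\phi\in\widehat{\mathbb{G}}}\divergence\big(x\mapsto A_\phi^{-1}f(\phi x)\big)(x)$, so it suffices to understand the divergence of a single summand $x\mapsto A_\phi^{-1}f(\phi x)$. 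First I would compute this summand's Jacobian by the chain rule: writing $\phi x=A_\phi x+b_\phi$, the map $x\mapsto A_\phi^{-1}f(A_\phi x+b_\phi)$ has Jacobian $A_\phi^{-1}(Df)(\phi x)A_\phi$. The divergence is the trace of this Jacobian, so the key computation is
\begin{equation*}
  \divergence\big(A_\phi^{-1}f(\phi\,\cdot)\big)(x)\;=\;\tr\!\big(A_\phi^{-1}(Df)(\phi x)A_\phi\big)\;=\;\tr\big((Df)(\phi x)\big)\;=\;(\divergence f)(\phi x)\;,
\end{equation*}
where the middle equality is the cyclic invariance of the trace (valid because $A_\phi^{-1}A_\phi=I$). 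Here the orthogonality of $A_\phi$ is not even needed for this step — only that it is invertible and constant in $x$ — but it is what guarantees $A_\phi^{-1}$ exists as claimed in property (ii) of the space group.

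Assembling the summands, this gives the clean identity $\divergence(\Gamma f)(x)=\frac{1}{|\widehat{\mathbb{G}}|}\sum_{\phi\in\widehat{\mathbb{G}}}(\divergence f)(\phi x)$, exhibiting the divergence of $\Gamma f$ as an \emph{average} of values of $\divergence f$ at the points $\phi x$. The final step is then purely a bound on averages: for each fixed $x$,
\begin{equation*}
  \big|\divergence(\Gamma f)(x)\big|\;\leq\;\frac{1}{|\widehat{\mathbb{G}}|}\msum_{\phi\in\widehat{\mathbb{G}}}\big|(\divergence f)(\phi x)\big|\;\leq\;\frac{1}{|\widehat{\mathbb{G}}|}\cdot|\widehat{\mathbb{G}}|\cdot\|\divergence f\|_{\sup}\;=\;\|\divergence f\|_{\sup}\;,
\end{equation*}
by the triangle inequality followed by bounding each term by the supremum. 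Taking the supremum over $x$ on the left yields $\|\divergence(\Gamma f)\|_{\sup}\leq\|\divergence f\|_{\sup}$, which is the claim.

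I expect the main obstacle to be largely bookkeeping rather than conceptual: one must be careful that $\widehat{\mathbb{G}}$ is finite (guaranteed by \cref{result:semi:invariance}) so that the finite average and the triangle inequality are legitimate, and one must correctly apply the chain rule to the composition $x\mapsto A_\phi^{-1}f(\phi x)$, keeping track of the fact that the \emph{inner} $A_\phi$ comes from differentiating $\phi x$ while the \emph{outer} $A_\phi^{-1}$ is the constant premultiplier. The trace identity $\tr(A^{-1}CA)=\tr(C)$ does the real work and collapses the per-summand divergence into $(\divergence f)\circ\phi$; once that is in hand the remainder is the elementary fact that a convex average of numbers is bounded in absolute value by their supremum. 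It is worth remarking that the same computation shows $\Gamma$ preserves the divergence-free property exactly (if $\divergence f\equiv 0$ then $\divergence(\Gamma f)\equiv 0$), which is the consequence actually used in the proof of \cref{result:flows}.
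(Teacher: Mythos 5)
Your proof is correct and follows essentially the same route as the paper's: compute each summand's Jacobian as $A_\phi^{-1}(Df)(\phi x)A_\phi$ by the chain rule, collapse the trace via cyclic invariance to get $(\divergence f)(\phi x)$, and bound the resulting finite average by the supremum. Your observation that only invertibility of $A_\phi$ (not orthogonality) is needed for the trace step is a correct minor sharpening, and your use of $\leq$ in the final chain is cleaner than the paper's display, which writes an equality where an inequality is meant.
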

\begin{proof}
  Since the divergence is the trace of the differential, and both trace and differential are linear,
  \begin{align*}
    \divergence(\Gamma f)(x)
    \;=\;
    \tr D\Bigl(\mfrac{1}{|\hat{\mathbb{G}}|}\msum_{\phi\in\hat{\mathbb{G}}}A_\phi^{-1}f(\phi x)\Bigr)
    \;&=\;
    \mfrac{1}{|\hat{\mathbb{G}}|}\msum_{\phi\in\hat{\mathbb{G}}}\tr D\big(A_\phi^{-1}f(\phi x)\big)
  \end{align*}
  Each summand has differential
  \begin{align*}
    D(A_{\phi}^{-1}f\circ\phi)(x)
    \;=\;
    A_{\phi}^{-1}(Df)(\phi x)(D\phi)(x)
    \;=\;
    A_{\phi}^{-1}(Df)(\phi x)A_\phi
  \end{align*}
  Since the trace is linear and orthogonally invariant, it follows that
  \begin{align*}
    |\tr D(A_{\phi}^{-1}f\circ\phi)(x)|
    \;=\;
    |\tr((Df)(\phi x))|
    \;=\;
    |\divergence(f)(\phi x)|
    \;\leq\;\|\divergence(f)\|_{\sup}
  \end{align*}
  The result then follows via the triangle inequality,
  \begin{equation*}
    \|\divergence(\Gamma f)(x)\|_{\sup}
    \;=\;
    \mfrac{1}{|\hat{\mathbb{G}}|}\msum_{\phi\in\hat{\mathbb{G}}}\|\divergence f\|_{\sup}
    \;=\;
    \|\divergence f\|_{\sup}\;,
  \end{equation*}
  which is all we had to show.
\end{proof}

\begin{proof}[Proof of \cref{result:flows}]
  Since $\rho$ is $\mathbb{T}_\group$-invariant and infinitely often differentiable, the function
  ${h\circ\rho}$ is $\mathbb{T}_\group$-invariant and ${(k+1)}$ times continuously differentiable.
  The function $\Lambda(h\circ\rho)$, by definition of $\Lambda$, a function of the gradient
  ${\nabla(h\circ\rho)}$. Since this gradient is $\mathbb{T}_\group$-invariant by
  \cref{lemma:differential}, $\Lambda(h\circ\rho)$ is $\mathbb{T}_\group$-invariant.
  By \cref{result:semi:invariance}, $\Gamma\Lambda(h\circ\rho)$ satisfies \eqref{semi:invariance}.
  It is also divergence-free, since $\Lambda(h\circ\rho)$ is divergence-free and $\Gamma$ does not increase divergence, by \cref{lemma:divergence}. We can therefore choose ${H:=\Gamma\Lambda(h\circ\rho)}$ in
  \cref{lemma:flow}, and the lemma shows that there is a unique solution $F$ that is an equivariant flow.
  Since $H$ is also divergence-free, as suitable version of Liouville's theorem (e.g.~\cite{Arnold:1998}, Theorem B.3.1)
  shows $F$ is volume-preserving.
\end{proof}

\subsection{Mechanical Model} \label{sec:mech_modle}
We use a nearly incompressible neo-Hookean hyperelastic model, which effectively describes the mechanical behavior of the elastomeric materials commonly employed in manufacturing of flexible meta-materials.
The mechanical behavior of such hyperelastic material can be described by introducing a strain energy density function $\mathbf \Psi$ that is independent of the path of deformation and is a function of the deformation gradient tensor $\mathbf F =\nabla \mathbf u + \mathbf I$. Here, $\mathbf u$ is the displacement field,
which is a function ${\mathbf u: \Omega \to \mathbb{R}^2}$.
A representation of this function is computed by the
differentiable mechanics solver described in \cref{sec:implementation}.
Since the mesh representation of the shape $s_\theta$ uses $N$ mesh points,
and the vector field has to degrees of freedom at each point, the representation
of $\mathbf{u}$ is an element of $\mathbb{R}^{2N}$. 
With these ingredients, the function $\mathbf{\Psi}$ is given by 
\begin{equation}
    \mathbf \Psi (\mathbf F) = \frac{\mu}{2}\bigg( \mathrm{tr}(\mathbf F^T \mathbf F) - 2 - 2 \ln \det(\mathbf F)\bigg) + \frac{\lambda}{2} \bigg(\ln \det(\mathbf F)\bigg)^2,
    \label{eq:energy}
\end{equation}
where $\mu = E/2(1+\nu)$ and $\lambda = \nu E/(1+\nu)(1-2\nu)$ are Lam\'{e} parameters of a material with Young's modulus $E$, and $\nu$ is the Poisson's ratio.
The displacement field of the structure at the equilibrium $\mathbf{u^*}$ is computed from the space of all admissible displacement fields $\mathcal H$ that satisfy Dirichlet boundary conditions (that is, $\mathbf{u}$ takes a suitable prescribed value on the boundary), by minimizing the total stored potential energy from material deformation under mechanical loading
\begin{equation}\label{enrgy_eq}
   \mathbf{u^*} = \arg \min_{ \mathbf u \in \mathcal H} \int_{\Omega} \mathbf{\Psi} (\mathbf F) d X.
\end{equation}
The minimizer is computed using a second order Newton method for sparse Hessians.

\subsection{Meta-Material Designs with Negative Poisson's Ratios}\label{sec:nu_designs}
All eight cellular solids designs with best performances in achieving~${\nu_{ef}=-0.5}$ while pore shapes respecting different symmetry groups.
\begin{figure}[h] 
\centering
\includegraphics[width=1.0\textwidth]{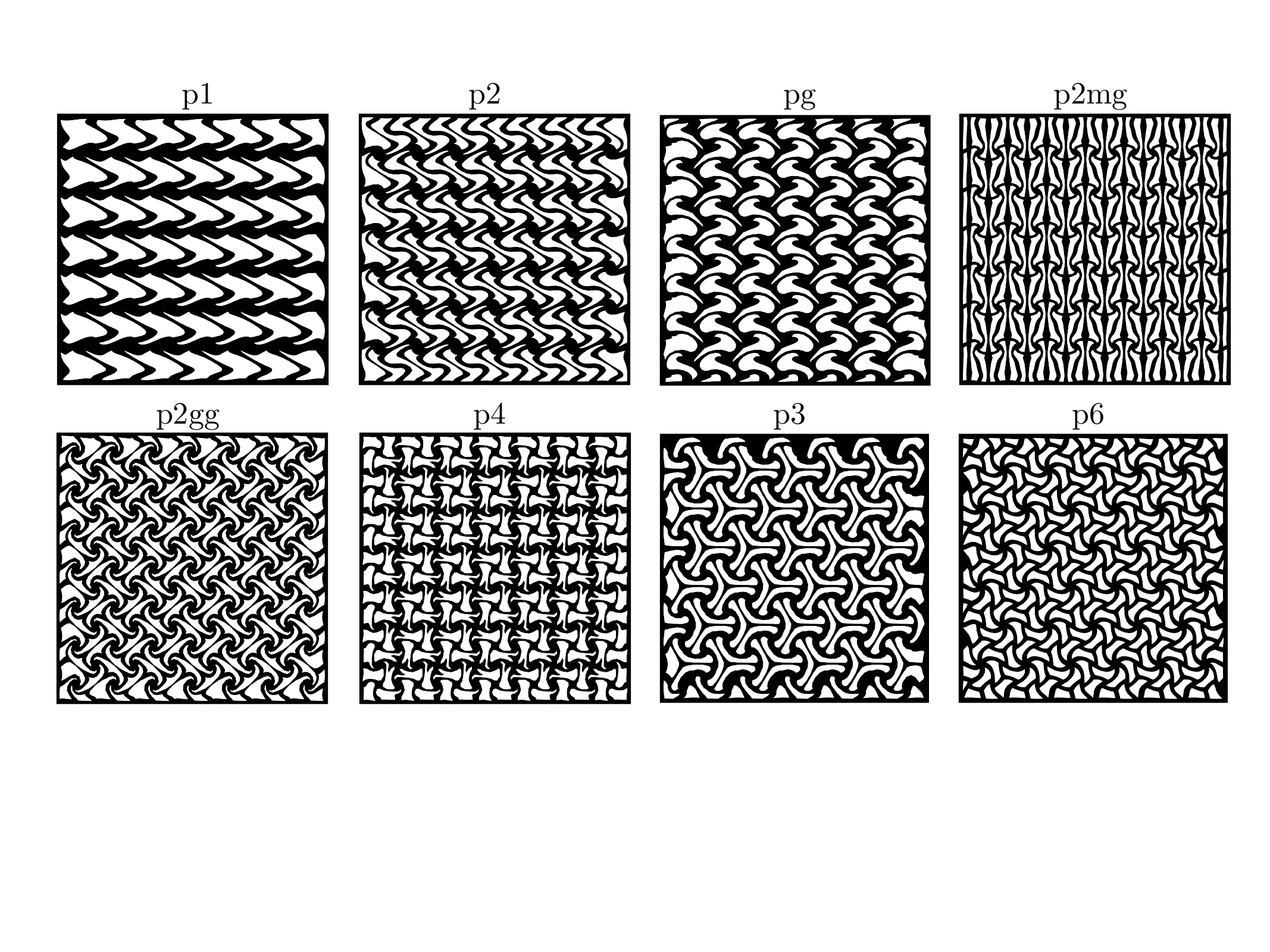}
\label{fig:negative_nus}   
\end{figure}

\subsection{Fabrication and Experimental Details}\label{sec:fabrication}
140mm-by-140mm samples with pull tabs were laser cut from butyl-based laser-engravable rubber sheets of 2.38 mm thickness. 
Laser cutting was performed using a Universal Laser Systems PLS6.150D machine with two 75-W CO$_2$ lasers emitting at 10.6 $\mu$m. 
Cut patterns were programmed using the Universal Laser Systems software suite with the following settings: 3 \% travel speed, 20 \% power. 
Cutting was repeated three times.

Experimental tensile tests were performed using an Instron 5969 UTM outfitted with an Instron 2530-series 50N static load cell. 
Samples were mounted in the UTM's clamps by their pull tabs; a controlled displacement of the top clamp between $-5$mm and $20$mm was applied over the course of three cycles. 
Video footage of the sample  was used to extract the vertical and lateral strain of each sample midway between the clamps, allowing calculation of the average experimental Poisson's ratio $\nu_{ex}$ at $0.1$ vertical strain.

\end{document}